\documentclass[article,aps,prd,9pt,notitlepage]{revtex4-2}

\usepackage{blindtext}
\usepackage{graphicx}
\usepackage{amsthm}
\usepackage{amsmath}
\usepackage{amssymb}
\usepackage{mathdots}
\usepackage{xcolor}
\usepackage{subcaption}
\usepackage{commath}
\usepackage{float}
\usepackage{fullpage}
\usepackage{appendix}
\usepackage{ragged2e}

\usepackage{bm}
\usepackage{amssymb}
\usepackage{graphicx,xcolor}
\usepackage{csquotes}
\usepackage{amsmath,graphicx}
\usepackage{braket}
\usepackage{bm}
\usepackage{tikz}
\usetikzlibrary{quantikz2}
\newtheorem{theorem}{Theorem}[section]
\newtheorem{corollary}{Corollary}[section]
\newtheorem{lemma}{Lemma}[section]
\newtheorem{corollary-for-lemma}[lemma]{Corollary}

\theoremstyle{remark}

\usetikzlibrary{quantikz2}
\theoremstyle{definition}
\newtheorem{definition}[theorem]{Definition}
\newcommand{\n}{\\ \nonumber}
\newcommand{\q}{\quad}
\newcommand{\I}{\mathbb{I}}
\newcommand{\qftn}[1]{\mathcal{F}_{#1}}
\newcommand{\qftdgn}[1]{\mathcal{F}_{#1}^\dagger}
\newcommand{\circeigv}[1]{\Gamma_{#1}}

\newcommand{\idenm}[1]{\mathbb{I}_{#1}}
\newcommand{\idenmnodim}{\mathbb{I}}

\newcommand{\hermpart}{\text{Herm}}
\newcommand{\skewhpart}{\text{Skew}}
\newcommand{\pauliz}{\sigma_Z}
\newcommand{\repart}{\mathfrak{Re}}
\newcommand{\impart}{\mathfrak{Im}}
\newcommand{\traceop}{\text{Tr}}
\newcommand{\bigoh}{\mathcal{O}}
\newcommand{\paulix}{\sigma_X}

\definecolor{revgreen}{rgb}{0.0,0.55,0.0}

\begin{document}

\title{Quantum algorithms for the exponentiation of Toeplitz matrices and applications in partial differential equations}
\author{ X. Guti\'errez$^{1,2}$, N. Mariella$^{3}$, J. Gonzalez-Conde$^{1,2,4}$, S. Zhuk$^{3}$, M. Sanz$^{1,2,5,6}$}
\affiliation{
$^{1}$ Department of Physical Chemistry, University of the Basque Country, UPV/EHU,Apartado 644, 48080 Bilbao, Spain\\
$^{2}$EHU Quantum Center, University of the Basque Country UPV/EHU, Apartado 644, 48080 Bilbao, Spain\\
$^{3}$ IBM Quantum, IBM Research Europe, Trinity Business School, Dublin, D02 F6N2 (Ireland) \\
$^{4}$ Quantum Mads, Calle Larrauri 1, Edificio A, piso 3, puerta 28, 48160 Derio, Spain\\
$^{5}$ Basque Center for Applied Mathematics (BCAM), Alameda de Mazarredo 14, 48009 Bilbao, Spain\\
$^{6}$IKERBASQUE, Basque Foundation for Science, Plaza Euskadi 5, 48009, Bilbao, Spain}
 \date\today
\begin{abstract}
We present quantum algorithms to approximate the exponential of banded Toeplitz matrices. These matrices have central importance in many PDE related problems, but their quantum implementation is hindered by their possibly large norms. Constructing directly the exponential we can circumvent this limitation. By relating the lower/upper shift operators to circulant and skew-circulant generators, which are diagonalised by the QFT, we construct (i) an LCU-based block encoding of banded Toeplitz matrices, (ii) an efficient, controllably truncated Pauli-string decomposition of the circulant eigenphases with a closed-form error bound, and (iii) a specialized QFT--Trotter product formula. As an application we build a block encoding of the propagator of the discretised heat equation with periodic, Dirichlet and Neumann boundary conditions, using a frequency cutoff.
\end{abstract}

\maketitle

\section{Introduction}\label{sec introduction}

Partial Differential Equations (PDEs) are ubiquitous in science and engineering, describing phenomena ranging from fluid dynamics to quantum mechanics, electromagnetism or  financial modeling. A standard approach to their numerical resolution consists of spatial discretisation, which reduces the problem to the time evolution of a linear system $\partial_t \boldsymbol{u} = A\boldsymbol{u}$, whose solution is proportional to the matrix exponential, $\boldsymbol{u}(t) = e^{At}\boldsymbol{u}(0)$. The matrix $A$ arising from the discretisation typically exhibits a highly structured form, often of a {\it banded Toeplitz} or quasi Toeplitz matrix, such that its entries are constant over the diagonals, and only a few number of these diagonals, close to the main one, are nonzero \cite{HornMatrix,ToeplitzReview}. 

Quantum computers have emerged as a promising platform for solving such problems, offering the potential for an exponential advantage over its classical counterparts \cite{ChildsPDE, ArrazolaPDE,CostaPDE}. However, quantum algorithms operate through unitary transformations, whereas the matrix exponential $e^{A t}$ is generally nonunitary for those operators that appear in PDE discretisation. This limitation can be circumvented by the {\it block encoding} framework \cite{Qubitization,QSVT,Explicit,PowerBE}, which embeds a subnormalized matrix $A/\alpha$ as a block of a larger unitary $U_A$, where $\alpha \geq \norm{A}$. Combined with recently developed tools such as {\it Quantum Signal Processing} (QSP) \cite{QSP,Qubitization,Unification} and {\it Quantum Singular Value Transformation} (QSVT) \cite{QSVT}, block encodings enable efficient polynomial transformations of the encoded matrices. It is important to notice that the transformed matrices in these algorithms are necessarily unitaries in the case of QSP, while QSVT transforms the singular values of the encoded matrix, although this transformation only coincides with the usual polynomial of the matrix in the Hermitian case \cite{Survey}. A great effort has been developed in the last years to generalize these methods, resulting in extensions such as General QSP \cite{GQSP}, which lifts several of the previous restrictions over the allowed polynomial transformations. 

Despite the power of these tools, block-encoding based algorithms incur a complexity that scales with the subnormalization factor $\alpha$. This dependence becomes particularly problematic for differential operators.
A typical example of the kind of matrices arising in the study of PDE is the discretised Laplacian operator, $\Delta$ \cite{FDM}, whose norm diverges with the spatial grid $\delta_x$ as $\norm{\Delta} \sim 4/\delta_x^2$, making it ill-conditioned for generic block-encoding approaches. 

A natural strategy is therefore to exploit the algebraic structure of the matrices arising from PDE discretisations, as the exponentiation of (quasi) Toeplitz matrices. Considerable effort has been focused on solving PDEs \cite{ChildsPDE,Schrodingerisation,Carleman,HS-price,Poisson,Spectral} or on encoding sparse structured matrices by leveraging the structure and sparsity of the operators \cite{Explicit,Asymptotic,BELS,StructuredData,BEStructure}, including recent examples of the explicit block encoding of Laplacian operators \cite{BELap,Laplacian_BE}. Nevertheless, the block encoding of the {\it exponentiation} of Toeplitz matrices has not been directly addressed. 

In this work we develop a quantum framework for directly implementing the block encoding of exponentials of banded Toeplitz matrices, circumventing the normalization bottleneck entirely. We notice that, although an arbitrary Toeplitz matrix is not generally diagonalisable, it can always be written as a polynomial of the upper and lower shift operators, $L_n$ and $L_n^\top$, respectively. Although these operators are nilpotent and thus non diagonalisable, they are block encoded by combinations of the circulant generator $C_n$ and the skew-circulant generator $N_n$, both of which are diagonalised by the {\it Quantum Fourier Transform} (QFT) \cite{NielsenChuang}. 
We show that the diagonal eigenvalue matrix of the operators $C_n$ and $N_n$ can be well-approximated by a truncated Pauli-string decomposition with explicit error bounds. Thus, since the resulting Pauli strings commute, the exponentiation of these matrices can be reduced to the exponentiation of the truncated Pauli-strings, yielding an efficient implementation via single-qubit rotations. Finally, the decomposition of the Toeplitz matrix as a sum of a polynomial of $C_n$ and $N_n$ reduces the exponentiation to the {\it Trotter product} \cite{Hall2015,MolerVanLoan1978} of two QFT-diagonalisable unitaries.

When additional structure is known about the operator, the approach can be further specialized. As a concrete application we consider the one-dimensional heat equation, whose solution, after spatial discretisation, is given by the exponentiation of the discretised Laplacian $\Delta$. A direct block encoding of $e^{\Delta t}$ built by applying QSVT/QSP to a block encoding of the Laplacian is prohibitively costly. We can avoid this by working directly with the propagator: because its eigenvalues decay rapidly away from the low-frequency modes, we can truncate it with controlled error, removing the $\delta_x^{-2}$ scaling. The resulting truncated generator is non-unitary and can be implemented via Linear Combination of Hamiltonian Simulation (LCHS) \cite{Optimal-LCHS} as a combination of skew-Hermitian unitaries, each of which can be approximated by the exponentiation of the truncated Pauli-strings.

The paper is organized as follows. Section \ref{sec preliminary} introduces the notation and definitions that we will use throughout the work. In Section \ref{sec be toeplitz} we show how to represent a general non diagonalisable Toeplitz matrix in terms of diagonalisable operators as a block encoding. Section \ref{sec approximated be} develops an efficient implementation of an approximation to such a block encoding. In particular, we show how we can implement the exponential of each of its terms in an efficient manner. Section \ref{sec:qft-trotter} states the QFT-Trotter product formula for the exponentiation of Toeplitz matrices. In Section \ref{sec:heat-equation} we consider the case of the one-dimensional discretised heat equation. Section \ref{sec dicussion} summarizes our conclusions and discusses future directions.

\section{Preliminary}\label{sec preliminary}

We summarize here the main definitions that we are going to need for the rest of the work.

Given an arbitrary square matrix $A$, its Hermitian and skew-Hermitian components are defined as
\begin{align}
    \hermpart(A) \equiv \frac{A+A^\dagger}{2},\q \skewhpart(A) \equiv \frac{A-A^\dagger}{2},
\end{align}
such that $\hermpart(A)+\skewhpart(A) = A$. 

A {\it Toeplitz} matrix \cite{HornMatrix,ToeplitzReview} $V$ as a quantum operator acting on $n$ qubits is defined as
\begin{align}\label{def toeplitz}
    \bra{i}V\ket{j}=\begin{cases}\bra{0}V\ket{j-i}, & j\ge i,\\[2pt] \bra{i-j}V\ket{0}, & j<i,\end{cases}
\end{align}
such that all the elements of each of the diagonals have the same value. Equivalently, $\bra{i}V\ket{j}$ depends only on $j-i$. There are several kinds of specific Toeplitz matrices that are relevant for this work: in particular, we focus on {\it circulant}, {\it skew-circulant}, {\it lower shift} and {\it upper shift} matrices.

A {\it circulant} matrix \cite{HornMatrix,ToeplitzReview} $K$ as a quantum operator on $n$ qubits is defined by the relation
\begin{align}
    \bra{i}K\ket{j} = \bra{0}K\ket{(j-i) \bmod 2^n},
\end{align}
In a similar way, a {\it skew-circulant} (or {\it negacyclic}) matrix $K$ as a quantum operator acting on $n$ qubits is defined by
\begin{align}
    \bra{i}K\ket{j} = (\text{sgn}(j-i)+\delta_{i,j})\bra{0}K\ket{(j-i)\bmod 2^n}.
\end{align}

As a specific kind of circulant operators, we define the {\it cyclic permutation} matrix $C_n$ (also called  {\it circulant generator}) on $n$ qubits as
\begin{align}\label{cyclic permutation}
    \bra{i}C_n^k\ket{j} = \begin{cases}
        1,\q j-i = k(\bmod 2^n) \\0,\q \text{otherwise}.
    \end{cases}
\end{align}
Analogously, we define a specific kind of skew-circulant operators on $n$ qubits, called the {\it skew-circulant permutation} matrix $N_n$, also called {\it skew-circulant generator}. Given an integer $k$, its $k$-th power can be defined in terms of the cyclic permutation operator defined in Eq.(\ref{cyclic permutation}) as
\begin{align}\label{skew-circulant generator}
N_n^k = (-1)^q \left( \I - 2\sum_{i=1}^r \ket{2^n-i}\bra{2^n-i}\right)C_n^r,
\end{align}
with $q,r$ the uniquely specified integers satisfying $k=q2^n+r$, and $0\leq r \leq 2^n-1$.

In matrix form, these two operators are
\begin{align}
    C_n =
    \begin{pmatrix}
        0 & 1 &        &        &   \\
          & 0 & 1      &        &   \\
          &   & \ddots & \ddots &   \\
          &   &        & 0      & 1 \\
        1 &   &        &        & 0
    \end{pmatrix}\q N_n =
    \begin{pmatrix}
        0 & 1 &        &        &   \\
          & 0 & 1      &        &   \\
          &   & \ddots & \ddots &   \\
          &   &        & 0      & 1 \\
        -1 &   &        &        & 0
    \end{pmatrix},
\end{align}
that is, the skew-circulant $N_n$ differs from the circulant $C_n$ in that it has a sign change across the main diagonal. Notice that both the circulant and the skew-circulant operators are unitary.

Another kind of Toeplitz matrix is the {\it lower shift} matrix $L_n$ acting on $n$ qubits, defined by the relations $L_n^k\ket{i} = \ket{i+k}$ for nonnegative $i,k$ s.t. $i+k\leq 2^n-1$ and $\bra{0}L_n = 0$. By Eq.(\ref{def toeplitz}), we see it corresponds to a matrix with $1$s in the first subdiagonal, and $0$s everywhere else. Its transpose $L_n^\top = \paulix^{\otimes n}L_n\paulix^{\otimes n}$, where $\paulix = \begin{pmatrix}
    0&1\\1&0
\end{pmatrix}$ is the Pauli X operator, is called the {\it upper shift matrix}, and corresponds to the {\it Jordan block} \cite{HornMatrix} of the same order with eigenvalue $0$. These two matrices are nilpotent, such that $L_n^k = 0$ and $\left(L_n^\top \right)^k = 0$  for $k\geq 2^n$. We see that we can relate the circulant an skew-circulant generators with these upper and lower shift matrices.
\begin{lemma}\label{lemma:Cn-Nn}
    Let $n$ be a positive integer. Let $k$ be an integer and let $(q,r)$ be the unique pair of integers s.t. $k=q2^n+r$ and $0\leq r < 2^n$. Then
    \begin{align}\label{identity CN L}
        C_n^k &= \left( L_n^\top \right)^r +L_n^{2^n-r},\n
        C_{n+1}^k &= \paulix^q\otimes\left( L_n^\top \right)^r + \paulix^{q+1}\otimes L_n^{2^n-r},\n
        N_n^k &= (-1)^q\left( L_n^\top \right)^r +(-1)^{q+1}L_n^{2^n-r}.
    \end{align}
\end{lemma}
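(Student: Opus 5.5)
We prove all three identities by comparing matrix elements $\bra{i}\cdot\ket{j}$ for $0\le i,j\le 2^n-1$, using the explicit action of the shift powers. From the definitions, $L_n^m\ket{j}=\ket{j+m}$ when $j+m\le 2^n-1$ and $0$ otherwise. Dually, $(L_n^\top)^m\ket{j}=\ket{j-m}$ when $j\ge m$ and $0$ otherwise. We use the conventions $L_n^{0}=(L_n^\top)^0=\I$ and $L_n^{2^n}=0$ (nilpotency). Hence $\bra{i}(L_n^\top)^r\ket{j}=1$ exactly when $j-i=r$, and $\bra{i}L_n^{2^n-r}\ket{j}=1$ exactly when $i-j=2^n-r$, i.e. $j-i=r-2^n$. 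For $0\le r<2^n$, these are precisely the two values of $j-i\in(-2^n,2^n)$ congruent to $r$ modulo $2^n$, and they are mutually exclusive. The case $r=0$ needs a separate check: the second term is $L_n^{2^n}=0$ and the first is $\I$, which is again consistent.

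For the first identity, Eq.~(\ref{cyclic permutation}) says $\bra{i}C_n^k\ket{j}=1$ iff $j-i\equiv k\equiv r \pmod{2^n}$. By the observation above, this holds iff exactly one of the two summands has entry $1$ at $(i,j)$, so $C_n^k=(L_n^\top)^r+L_n^{2^n-r}$.

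For the third identity we use Eq.~(\ref{skew-circulant generator}). The matrix $C_n^r=(L_n^\top)^r+L_n^{2^n-r}$ has its upper part $(L_n^\top)^r$ supported on rows $i=j-r\le 2^n-1-r$. Its wrapped part $L_n^{2^n-r}$ is supported on rows $i=j+2^n-r\ge 2^n-r$, i.e. the last $r$ rows $2^n-1,\dots,2^n-r$. Left-multiplying by the diagonal $\I-2\sum_{i=1}^r\ket{2^n-i}\bra{2^n-i}$ therefore flips exactly the sign of the wrapped part. The overall factor $(-1)^q$ then gives $(-1)^q(L_n^\top)^r+(-1)^{q+1}L_n^{2^n-r}$.

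For the second identity, write $C_{n+1}^k$ on $n+1$ qubits with the first qubit most significant, so indices are $(a,i)$ with value $a2^n+i$. Take $k=q2^n+r$, where $q$ may exceed $1$. Then $C_{n+1}^k\ket{b,j}=\ket{a,i}$ with $a2^n+i\equiv b2^n+j-k \pmod{2^{n+1}}$. Two cases arise. If $j\ge r$, then $i=j-r$ and $a\equiv b-q \pmod 2$; this is the $(L_n^\top)^r$ term with $\paulix^{q}$ on the top qubit. If $j<r$, a borrow occurs, giving $i=j-r+2^n$ and $a\equiv b-q-1 \pmod 2$; this is the $L_n^{2^n-r}$ term with $\paulix^{q+1}$. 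Since $\paulix^m$ depends only on $m \bmod 2$, this yields the claim. Equivalently, one can start from $C_{n+1}^{2^n}=\paulix\otimes\I$ and $C_{n+1}^r=\I\otimes(L_n^\top)^r+\paulix\otimes L_n^{2^n-r}$.

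The main obstacle is bookkeeping rather than any conceptual difficulty. We must fix the qubit ordering and the direction of the shift consistently with the matrix pictures, since $C_n$ shifts $\ket{j}\mapsto\ket{j-1}$. The borrow in the second identity has to produce the extra $\paulix$, and the edge case $r=0$ has to be confirmed against the convention $L_n^{2^n}=0$.
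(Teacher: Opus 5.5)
Your proposal is correct and follows essentially the same route as the paper. Both arguments split the action of $C_n^k$ on $\ket{j}$ into the non-wrapping range $j\ge r$, which gives $(L_n^\top)^r$, and the wrapping range $j<r$, which gives $L_n^{2^n-r}$; both obtain the $C_{n+1}^k$ identity by tracking the wraparound (your borrow) as a parity on the extra qubit. The only cosmetic difference is in the third identity: you read the signs directly off the defining formula Eq.~(\ref{skew-circulant generator}), noting that the diagonal flips exactly the last $r$ rows carrying the wrapped part, whereas the paper invokes $N_n^{2^n}=-\I$ — this amounts to the same bookkeeping.
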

\begin{proof}
Since $C_n^{2^n}=\I$, $C_n^k\ket{j}=\ket{(j-r)\bmod 2^n}$ for every $0\le j<2^n$; for $j\ge r$ this coincides with $(L_n^\top)^r\ket{j}$ and for $j<r$ with $L_n^{2^n-r}\ket{j}$, and the two ranges are disjoint, proving the first identity. The second follows by tracking the wraparound parity on the extra qubit, and the third from $N_n^{2^n}=-\I$, which contributes the signs $(-1)^q$ and $(-1)^{q+1}$. 
\end{proof}

We anticipate an important consequence, which will play a central role in the construction of the specialized Trotter formula in Section \ref{sec:qft-trotter}. From Lemma.(\ref{lemma:Cn-Nn}) we directly see that
\begin{align}\label{eq:sum-Cn-Ln}
\frac{C_n^k+N_n^k}{2}=\frac{1+(-1)^q}{2}\left( L_n^\top \right)^r + \frac{1+(-1)^{q+1}}{2}L_n^{2^n-r}.
\end{align} 
This identity allows us to write a generally non-diagonalisable Toeplitz matrix as the sum of two diagonalisable operators, $C_n$ and $N_n$.

Given a $d$-degree polynomial with complex coefficients $p(x)=\sum_{k=0}^d \alpha_k x^k$, the matrix $p(L_n)$ is a lower triangular Toeplitz matrix, such that all the elements in the $k$-th subdiagonal have a value of $\alpha_k$. Similarly, $p(L_n^\top)$ is an upper triangular Toeplitz matrix. 

Let $\{\alpha_i\}$ be $2m+1$ real coefficients with $i\in[-m,m]$. We define the {\it banded Toeplitz operator} of bandwidth $m$ on $n$ qubits, assuming that $m = \text{poly}(n)$, by
\begin{align}\label{eq:toeplitz-poly}
    T = \alpha_0\I + \sum_{k=1}^m\left[\alpha_{-k}L_n^k + \alpha_k\left( L_n^\top\right)^k \right].
\end{align}
We restrict ourselves to banded matrices with equal bandwidth on both sides of the diagonal and half-bandwidth $m$, defined as the largest index $m$ such that $\alpha_m \neq 0$, although the general case, in which these are not equal, is equivalent.

Finally, a matrix closely related to the Toeplitz is the Hankel matrix. A Hankel matrix $H$ as an operator acting on $n$ qubits is a skew-diagonal Toeplitz matrix, defined by
\begin{align}
    \label{eq:hank-def}
    \bra{i}H\ket{j} =&
    \begin{cases}
        \bra{0}H\ket{j+i}, & i+j < 2^n,\\
        \bra{j+i-(2^n-1)}H\ket{2^n-1}, & i+j \ge 2^n.
    \end{cases}
\end{align}
We see that we can obtain a Hankel matrix from a related Toeplitz matrix $V$ as
\begin{align}\label{eq:Hankel-toep}
H = \paulix^{\otimes n}V = V^\prime \paulix^{\otimes n},
\end{align}
where $V$ and $V'$ are the two different Toeplitz matrices obtained by reversing the rows, resp.\ columns, of $H$.

These techniques rely on a block encoded matrix $A$ \cite{Qubitization,PowerBE}, embedded in a larger unitary operator, $U_A$.
Formally, we can give the following definition:
\begin{definition}[Block encoding]\label{def BE}
    \textit{Given a matrix $A$ acting on $n$ qubits, if we can find $\beta, \epsilon \in \mathbb{R}_+$ and a unitary matrix $U_A$ acting on $(m+n)$ qubits, such that}
    \begin{align}
         \norm{A-\beta(\bra{0}^{\otimes m}\otimes\mathbb{I})U_A(\ket{0}^{\otimes m}\otimes\mathbb{I})}_2\le\epsilon,
    \end{align}
    \textit{then $U_A$ is a $(\beta,m,\epsilon)$-block encoding of $A$. In particular, for $\epsilon = 0$ $U_A$ is a $(\beta,m)$-block encoding of $A$.}
\end{definition}
The normalizing factor $\beta$ is required in order for $U_A$ to be unitary, as any block of a unitary matrix must be normalized. Thus, a block encoding must satisfy that $\norm{A}/\beta \leq 1$. Notice that a unitary matrix is a $(1,0)$-block encoding of itself.

Finally, we might want to manipulate a block encoded matrix. In particular, given a unitary $U$ and a $d$-degree polynomial $p(x)$ with complex coefficients and both negative and positive powers, we are interested in the block encoding of the matrix $p(U)$. In order to do so, we will make use of Generalized Quantum Signal Processing (GQSP), which allows us to obtain such block encoding efficiently by $d$ controlled calls of the unitary $U$, a single ancilla qubit and $d+1$ single-qubit rotations.

\begin{theorem}[GQSP Theorem, from \cite{GQSP}]\label{thm:gqsp}
    $\forall d,k \in \mathbb{N}, \forall \vec{\theta},\vec{\phi}\in\mathbb{R}^{d+1}, \lambda \in \mathbb{R}$ and $k\leq d$ we have:
    \begin{align}\label{GQSP Negative}
        \begin{pmatrix}
            P^\prime (U) & * \\ Q^\prime(U) & * \end{pmatrix} = &\left[ \prod_{j=1}^k R(\theta_{d-k+j},\phi_{d-k+j},0) A^\prime\right] \times \left[ \prod_{j=1}^{d-k} R(\theta_j,\phi_j{\color{revgreen},}0)A\right]R(\theta_0,\phi_0,\lambda) 
    \end{align}
    If and only if:
    \begin{align}
        \begin{pmatrix}
            P(U) & * \\ Q(U) & * \end{pmatrix} = \left[ \prod_{j=1}^d R(\theta_j,\phi_j,0) A \right]R(\theta_0,\phi_0,\lambda)
    \end{align}
    For $P^\prime(U) = U^{-k}P(U)$ and $Q^\prime(U) = U^{-k}Q(U)$, where
    \begin{align}
        A^\prime = \begin{pmatrix}
            \mathbb{I} & 0 \\ 0 & U^\dagger
        \end{pmatrix}.
    \end{align}
\end{theorem}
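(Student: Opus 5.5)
The plan is to reduce the statement to the standard GQSP identity by one algebraic observation: $A'$ equals $A$ multiplied by a global power of $U^\dagger$. Here $A=\begin{pmatrix}U&0\\0&\mathbb{I}\end{pmatrix}$ is the signal operator of \cite{GQSP}, which I take as the standard convention since the statement leaves it implicit. Then
\begin{align}
A' = \begin{pmatrix}\mathbb{I}&0\\0&U^\dagger\end{pmatrix} = (\mathbb{I}_2\otimes U^\dagger)\,A .
\end{align}
The key point is that $\mathbb{I}_2\otimes U^{\dagger}$ acts trivially on the signal qubit. It therefore commutes with every rotation $R(\theta,\phi,\lambda)$, which acts only on that qubit, and with $A$, since both are block diagonal with entries that are functions of $U$.

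First, I will substitute $A'=(\mathbb{I}_2\otimes U^\dagger)A$ into each of the $k$ factors of the first bracket in Eq.~(\ref{GQSP Negative}). Second, I will use these commutation relations to move all $k$ copies of $\mathbb{I}_2\otimes U^\dagger$ to the far left. This turns the right-hand side of Eq.~(\ref{GQSP Negative}) into
\begin{align}
(\mathbb{I}_2\otimes U^{-k})\left[\prod_{j=1}^{k}R(\theta_{d-k+j},\phi_{d-k+j},0)A\right]\left[\prod_{j=1}^{d-k}R(\theta_j,\phi_j,0)A\right]R(\theta_0,\phi_0,\lambda).
\end{align}
Third, I will check that after relabelling, the product in the middle is exactly $\prod_{j=1}^{d}R(\theta_j,\phi_j,0)A\,R(\theta_0,\phi_0,\lambda)$, the circuit appearing in the second identity. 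The indices $d-k+1,\dots,d$ follow $1,\dots,d-k$, and the ordering convention of the products must be read consistently in both expressions.

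Both directions of the equivalence then follow. Multiplying by $\mathbb{I}_2\otimes U^{-k}$ maps the left column $\begin{pmatrix}P(U)\\Q(U)\end{pmatrix}$ to $\begin{pmatrix}U^{-k}P(U)\\U^{-k}Q(U)\end{pmatrix}$, and this multiplication is invertible, so the two block forms determine each other. The only facts used are that $U^{-k}$ commutes with the blocks $P(U)$ and $Q(U)$, and that the starred blocks play no role.

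The main obstacle is bookkeeping rather than substance. I need to confirm the ordering convention of $\prod$ so that the $A'$ factors really sit to the left of the $A$ factors, and verify that the index relabelling matches. If the paper's convention for $A$ instead places $U$ in the lower block, I will redo the argument with $A'=(\mathbb{I}_2\otimes U^{\dagger})A$ replaced by the corresponding identity. The factorization trick is unchanged in that case.
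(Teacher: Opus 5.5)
Your proof is correct. The paper gives no proof of its own here; it imports the result from \cite{GQSP}. Your argument is the natural and complete one: write $A'=(\mathbb{I}_2\otimes U^\dagger)A$, commute the $k$ copies of $\mathbb{I}_2\otimes U^\dagger$ past the single-qubit rotations and the factors $A$, and pull out the invertible factor $\mathbb{I}_2\otimes U^{-k}$. It holds with the products read so that higher indices stand to the left, which is the only ordering under which the two circuits agree.

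Your closing hedge is unnecessary and would not work. With $U$ in the lower block, the stated $A'$ equals $A^\dagger$, which is not $(\mathbb{I}_2\otimes U^\dagger)A$. So the statement as written holds only for the convention $A=\mathrm{diag}(U,\mathbb{I})$ that you adopted.
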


The operators $R(\theta,\phi,\lambda) = \begin{pmatrix}
    e^{i(\lambda + \phi)}\cos(\theta) & e^{i\phi}\sin(\theta) \\ e^{i\lambda}\sin(\theta) & -\cos(\theta)
\end{pmatrix}$ are general $SU(2)$ rotations.

Linear Combination of Unitaries (LCU) \cite{LCU} is a quantum primitive that allows us to prepare block encodings of operators with the form $A = \sum_{j=0}^{J-1} c_j U_j$, where $U_j$ are unitary operators acting on $n$-qubits and $c_j$ are arbitrary coefficients. We assume that all the $c_j$s are real and positive, since the phase can be absorbed by the unitary operator. Assuming $J=2^a$, this is:
\begin{lemma}\label{lemma:lcu}
Let $U = \sum_{j=0}^{J-1}\ket{j}\bra{j}\otimes U_j$, called the {\it selector oracle}, and let $V$ acting as $V\ket{0^a} = \norm{c}_1^{-1/2}\sum_{j=0}^{J-1}\sqrt{c_j}\ket{j}$, where $\norm{c}_1 = \sum_j c_j$, called the {\it preparation oracle}. Let $W =\left(V^\dagger \otimes \I_n\right)U\left(V\otimes \I\right)$. Then, for a $n$-qubit state $\ket{\Psi}$,
\begin{align}
    W\ket{0}^a\ket{\Psi} = \frac{1}{\norm{c}_1}\ket{0^a}T\ket{\Psi} + \ket{\perp},
\end{align}
where $(\ket{0^a}\bra{0^a}\otimes \I_n)\ket{\perp}=0$.
\end{lemma}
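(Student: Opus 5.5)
We verify the statement by computing $W\ket{0^a}\ket{\Psi}$ and projecting onto $\ket{0^a}$ in the ancilla register. Here $T$ in the statement denotes the operator $A=\sum_{j} c_j U_j$ being block encoded. Since the $c_j$ are real and positive (phases absorbed into the $U_j$), $\sqrt{c_j}$ is well defined. Also, $\norm{c}_1^{-1/2}\sum_j\sqrt{c_j}\ket{j}$ is a unit vector, so a unitary $V$ with this action exists; the assumption $J=2^a$ means no padding of the index register is needed.

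\textbf{Step 1: preparation.} Applying $V\otimes\I$ gives
\begin{align}
(V\otimes\I)\ket{0^a}\ket{\Psi}=\norm{c}_1^{-1/2}\sum_{j=0}^{J-1}\sqrt{c_j}\ket{j}\ket{\Psi}.
\end{align}

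\textbf{Step 2: selection.} By definition of the selector oracle, $U$ maps $\ket{j}\ket{\Psi}\mapsto\ket{j}U_j\ket{\Psi}$. This yields
\begin{align}
\norm{c}_1^{-1/2}\sum_j\sqrt{c_j}\ket{j}U_j\ket{\Psi}.
\end{align}

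\textbf{Step 3: unpreparation.} Apply $V^\dagger\otimes\I$. Instead of computing the full output, decompose it as $(\ket{0^a}\bra{0^a}\otimes\I)W\ket{0^a}\ket{\Psi}+\ket{\perp}$, where $\ket{\perp}$ is defined as the remainder. By construction $\ket{\perp}$ is annihilated by the projector, since the projector is idempotent. For the projected part, use
\begin{align}
\bra{0^a}V^\dagger=\left(V\ket{0^a}\right)^\dagger=\norm{c}_1^{-1/2}\sum_k\sqrt{c_k}\bra{k}.
\end{align}
This gives
\begin{align}
(\bra{0^a}\otimes\I)W\ket{0^a}\ket{\Psi}=\norm{c}_1^{-1}\sum_{j,k}\sqrt{c_k}\sqrt{c_j}\braket{k|j}\,U_j\ket{\Psi}=\norm{c}_1^{-1}\sum_j c_jU_j\ket{\Psi}=\frac{1}{\norm{c}_1}T\ket{\Psi}.
\end{align}
This is exactly the claimed identity. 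Equivalently, $W$ is a $(\norm{c}_1,a)$-block encoding of $T$ in the sense of Definition~\ref{def BE}.

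The only point requiring care is Step 3. One must recognise that $\bra{0^a}V^\dagger$ is the conjugate of the prepared state, and that this uses the reality of the $\sqrt{c_j}$. For complex $c_j$ one would otherwise obtain $|c_j|$ or mismatched phases. The remaining steps are direct substitutions.
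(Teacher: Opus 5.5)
Your proof is correct. The paper gives no proof of its own here and simply cites the standard LCU construction; your prepare--select--unprepare computation, using $\bra{0^a}V^\dagger=(V\ket{0^a})^\dagger$ and the reality of the $\sqrt{c_j}$, is exactly that standard argument, and you rightly read the $T$ in the statement as the paper's notational slip for $\sum_j c_j U_j$.
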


We notice that $a$ only needs to be logarithmic in the number of terms. The selector oracle can be efficiently implementable if all the combined unitaries are powers of a given unitary \cite{QAlgCKS}.

\begin{lemma}\label{lemma:efficient-LCU}
 Let $U_j$ be a set of unitaries such that $U_j = U_0^j$. Then the select oracle $\sum_{j=0}^{J-1}\ket{j}\bra{j}\otimes U_j$ can be constructed with $\lceil \log(J)\rceil$ queries to controlled $U_j$'s.
\end{lemma}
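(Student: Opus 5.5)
The approach is the standard binary-decomposition argument. Write the index register as $a=\lceil\log J\rceil$ qubits, so that each basis state $\ket{j}$ with $0\le j<J$ has the expansion $j=\sum_{\ell=0}^{a-1} j_\ell 2^\ell$ with $j_\ell\in\{0,1\}$. Since $U_j=U_0^j$, we get
\begin{align}
U_j = U_0^{\sum_\ell j_\ell 2^\ell} = \prod_{\ell=0}^{a-1}\left(U_0^{2^\ell}\right)^{j_\ell}.
\end{align}
The factors commute because they are all powers of the single unitary $U_0$, so their order is irrelevant.

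The circuit is then as follows. For each $\ell=0,\dots,a-1$, apply the unitary $U_0^{2^\ell}=U_{2^\ell}$ to the system register, controlled on the $\ell$-th index qubit. On a basis input $\ket{j}\ket{\Psi}$, the $\ell$-th controlled gate acts as $U_{2^\ell}$ if $j_\ell=1$ and as the identity otherwise. The index register is never modified, since only diagonal controls touch it.

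Taking the product of the gates gives $\ket{j}\otimes\prod_\ell U_{2^\ell}^{j_\ell}\ket{\Psi}=\ket{j}\otimes U_j\ket{\Psi}$. By linearity in the index register, the whole circuit equals $\sum_{j}\ket{j}\bra{j}\otimes U_j$. Basis states with $j\ge J$, which appear when $J$ is not a power of two, are outside the support of the preparation oracle in Lemma \ref{lemma:lcu}, so their action is irrelevant; alternatively, one simply pads with $U_j=U_0^j$. The construction uses exactly $a=\lceil\log J\rceil$ controlled queries, namely one to each of the controlled $U_{2^\ell}$, which belong to the family $\{U_j\}$.

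I do not expect a real obstacle. The only point to be careful about is how "queries" are counted. The claim counts calls to controlled $U_j$ for $j=2^\ell$, not calls to $U_0$ itself; counting calls to $U_0$ would give $J-1$ rather than $\log J$. I would state this explicitly, and note that in the paper's application the powers $C_n^{2^\ell}$ and $N_n^{2^\ell}$ are themselves cheap to implement directly (for example, as diagonal phases in the QFT basis), which justifies this query model.
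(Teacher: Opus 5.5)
Your proposal is correct and follows the same approach as the paper, which cites the result and remarks that with $\lceil\log J\rceil$ control qubits, qubit $l$ controls an application of $U_{2^l}$; that is exactly your binary-decomposition circuit. Your index range $\ell=0,\dots,a-1$ gives the right count, whereas the paper's range $0\le l\le\lceil\log J\rceil$ is off by one, and you correctly read $U_0$ as the generating unitary rather than literally $U_0=U_0^0=\I$, which is how the statement must be meant.
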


Thus, we only need $\lceil \log(J)\rceil$ control qubits, such that each $0\leq l \leq \lceil \log(J)\rceil$ controls an application of $U_{2^l}$.

\section{Block-encoding of Toeplitz matrices}\label{sec be toeplitz}

We are interested in obtaining a block encoding for the banded Toeplitz defined in Eq.(\ref{eq:toeplitz-poly}). Because the upper and lower shift matrices are neither unitary nor Hermitian, we cannot implement the polynomial via GQSP or QSVT.
In order to treat them we first identify a relation between circulant matrices and lower and upper shift matrices, from Eq.(\ref{identity CN L}).
\begin{lemma}
    Let $C_{n+1}$ be a circulant generator acting on $n+1$ qubits as defined in Eq.(\ref{cyclic permutation}). Let $k$ be an integer and let $(q,r)$ be the unique pair of integers s.t. $k = q2^n+r$ and $0 \leq r < 2^n$. For a given single-qubit state $\ket{\phi}$ s.t. $\bra{\phi}\paulix\ket{\phi}\in \{ -1,0,1 \}$, then
    \begin{align}
        \left( \bra{\phi}\otimes \I \right)C_{n+1}^k\left( \ket{\phi}\otimes \I \right) = \bra{\phi}\paulix^q\ket{\phi}\left(L_n^\top \right)^r + \bra{\phi}\paulix^{q+1}\ket{\phi}L_n^{2^n-r}.
    \end{align}
\end{lemma}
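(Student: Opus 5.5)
The plan is to obtain the statement directly from the second identity of Lemma~\ref{lemma:Cn-Nn}. That identity gives
\begin{align}
C_{n+1}^k = \paulix^q\otimes\left( L_n^\top \right)^r + \paulix^{q+1}\otimes L_n^{2^n-r},
\end{align}
where $k=q2^n+r$ and $0\le r<2^n$. The right-hand side is a sum of tensor products, with a single-qubit operator on the first (most significant) qubit and an $n$-qubit operator on the rest.

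Sandwiching with $\bra{\phi}\otimes\I$ on the left and $\ket{\phi}\otimes\I$ on the right acts factor-wise on each tensor product:
\begin{align}
\left(\bra{\phi}\otimes\I\right)\left(A\otimes B\right)\left(\ket{\phi}\otimes\I\right)=\bra{\phi}A\ket{\phi}\,B .
\end{align}
This is just linearity of the partial contraction. Applying it to both terms yields the claimed formula immediately. The coefficients $\bra{\phi}\paulix^{q}\ket{\phi}$ and $\bra{\phi}\paulix^{q+1}\ket{\phi}$ only depend on the parity of $q$, since $\paulix^2=\I$. One of them therefore equals $\braket{\phi|\phi}=1$ and the other equals $\bra{\phi}\paulix\ket{\phi}$.

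The hypothesis $\bra{\phi}\paulix\ket{\phi}\in\{-1,0,1\}$ is not needed for the identity itself. I would note that it only singles out the useful cases: $\ket{+}$ recovers Eq.~(\ref{identity CN L}) for $C_n^k$, $\ket{-}$ recovers the one for $N_n^k$ (with signs $(-1)^q$, $(-1)^{q+1}$), and $\ket{0}$ isolates a single shift term.

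The only genuine step is justifying the second identity of Lemma~\ref{lemma:Cn-Nn}, whose proof was only sketched. If a fuller check is wanted, I would verify it on basis states $\ket{b}\ket{j}$ with $b\in\{0,1\}$ and $0\le j<2^n$, writing the $(n+1)$-qubit index as $b2^n+j$. Since $C_{n+1}$ has period $2^{n+1}$, the reduced index is $k \bmod 2^{n+1} = (q\bmod 2)2^n + r$.

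Then $C_{n+1}^k\ket{b2^n+j}=\ket{(b2^n+j-(q\bmod 2)2^n-r)\bmod 2^{n+1}}$, and two cases follow:
\begin{itemize}
\item If $j\ge r$, the result is $\ket{b\oplus q}\ket{j-r}$, which matches $\paulix^q\otimes(L_n^\top)^r$.
\item If $j<r$, there is an extra borrow, giving $\ket{b\oplus q\oplus 1}\ket{j-r+2^n}$, which matches $\paulix^{q+1}\otimes L_n^{2^n-r}$.
\end{itemize}
This borrow/parity bookkeeping is the main place where care is needed, in particular the edge case $r=0$. There $L_n^{2^n}=0$, so only the first term survives, as it should.
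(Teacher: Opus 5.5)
Your proposal is correct and follows the same route as the paper: the lemma comes directly from the second identity of Lemma~\ref{lemma:Cn-Nn} by contracting the extra (most significant) qubit against $\ket{\phi}$, and your basis-state check with the borrow on $\ket{b}\ket{j}$ is a fuller version of the paper's ``tracking the wraparound parity'' sketch. You are also right that the hypothesis $\bra{\phi}\paulix\ket{\phi}\in\{-1,0,1\}$ plays no role in the identity itself.
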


In particular, considering the case for $\ket{\phi} = \ket{0}$, we see that the result is the Toeplitz matrix
\begin{align}\label{eq:Cn-be-T}
\left( \bra{0}\otimes \I \right)C_{n+1}^k\left( \ket{0}\otimes \I \right) = \frac{1+(-1)^q}{2}\left( L_n^\top \right)^r + \frac{1+(-1)^{q+1}}{2}L_n^{2^n-r}.
\end{align}
Note that the two terms are not simultaneously non-zero since $q$ and $q+1$ have complementary parities. We notice that, from the definition \ref{def BE}, this construction is a block encoding and can be extended to obtain the Toeplitz defined in Eq.(\ref{eq:toeplitz-poly}).  
Let ${\alpha_i}$ be $2m+1$ real coefficients with $i\in[-m,m]$. For $0\leq m \ll 2^n$, we have that
\begin{align}\label{eq:circulant-be-toeplitz}
    \left(\bra{0}\otimes \I \right)\left( \sum_{k=-m}^m\alpha_k C_{n+1}^k \right)\left(\ket{0}\otimes \I \right) = \alpha_0\I + \sum_{k=1}^m\left[\alpha_{-k}L_n^k + \alpha_k\left( L_n^\top\right)^k \right] = T.
\end{align}
Notice that, since the circulant generators are unitary matrices, the sum of the left hand side of Eq.(\ref{eq:circulant-be-toeplitz}), inside the parenthesis, is a LCU. 

We now turn our attention to the implementation of circulant matrices. These are unitarily diagonalisable under the discrete Fourier transform. In the context of quantum computing, this corresponds to the Quantum Fourier Transform (QFT) \cite{NielsenChuang}. Let $\qftn{n}$ denote the QFT operator, as
\begin{align}\label{qftn operator}
\qftn{n} = \frac{1}{\sqrt{2^n}}\sum_{i,j=0}^{2^n-1}\omega^{ij}\ket{i}\bra{j},
\end{align}
with $\omega=\exp(2\pi i/2^n)$. Thus, we can diagonalise the $n$-qubit circulant generator in Eq.(\ref{cyclic permutation}) as
\begin{align}\label{eq:Cn-diag}
C_n^k = \qftn{n}\Gamma_n^k\qftn{n}^\dagger,
\end{align}
for all $k \in \mathbb{Z}$, where $\Gamma_n$ is a diagonal matrix with the eigenvalues of the circulant generator $C_n$, given by
\begin{align}\label{eq:gamma-eig}
\Gamma_n = \sum_{k=0}^{2^n-1}\omega^k\ket{k}\bra{k}.
\end{align}
In order to relate it with its implementation, we notice that we can write it as the product of $n$ one-qubit operators.
\begin{lemma}\label{lemma:circ-eigvals-as-product}
The diagonal of the eigenvalues of Eq.(\ref{eq:gamma-eig}) exhibits a product operator structure, as
\begin{align}\label{eq:gamma-phase-gates}
\Gamma_n = \bigotimes_{l=1}^n\begin{pmatrix}
    1 & \\ & e^{2\pi i/2^l}
\end{pmatrix} = \bigotimes_{l=1}^n\begin{pmatrix}1&\\&-1\end{pmatrix}^{1/2^{l-1}},
\end{align}
that is, $\Gamma_n = \bigotimes_{l=1}^n P(2\pi/2^l)$, where $P(\theta)$ is a single qubit phase gate.
\end{lemma}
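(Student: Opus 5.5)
The plan is a direct computation on computational basis states. We write each index $0\le k<2^n$ in binary, and we need a bit-ordering convention compatible with the tensor product in Eq.(\ref{eq:gamma-phase-gates}). Write $k=\sum_{l=1}^n k_l 2^{n-l}$, where $k_l\in\{0,1\}$ is the bit stored on the $l$-th tensor factor, so that $\ket{k}=\ket{k_1}\otimes\cdots\otimes\ket{k_n}$. Then $\Gamma_n$ is diagonal with entry $\omega^k$, and
\begin{align}
\omega^k = \exp\!\left(\frac{2\pi i}{2^n}\sum_{l=1}^n k_l 2^{n-l}\right) = \prod_{l=1}^n \exp\!\left(\frac{2\pi i\, k_l}{2^{l}}\right).
\end{align}
The exponent of $2$ in the last expression is what fixes the convention. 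With the factor $\exp(2\pi i/2^l)$ attached to qubit $l$, that qubit must carry weight $2^{n-l}$ in $k$. This is the big-endian ordering used by the QFT in Eq.(\ref{qftn operator}).

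Next we compare with the tensor product side. The operator $\bigotimes_{l=1}^n P(2\pi/2^l)$ is diagonal in the computational basis. On $\ket{k_1}\otimes\cdots\otimes\ket{k_n}$ it acts by the scalar $\prod_l e^{2\pi i k_l/2^l}$, since $P(\theta)\ket{b}=e^{i\theta b}\ket{b}$. This matches the diagonal entry $\omega^k$ above term by term. Since both operators are diagonal in the same basis with equal entries, they are equal.

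The second equality in Eq.(\ref{eq:gamma-phase-gates}) follows from $\mathrm{diag}(1,-1)^{1/2^{l-1}}=\mathrm{diag}(1,e^{i\pi/2^{l-1}})=\mathrm{diag}(1,e^{2\pi i/2^l})$. Here we take the principal branch, $(-1)^{s}=e^{i\pi s}$, which must be stated explicitly.

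The only real subtlety is the bit-ordering and branch convention. If qubit $l$ instead carried weight $2^{l-1}$, the phases would reverse to $P(2\pi/2^{n-l+1})$. So the main step is fixing the endianness consistently with Eq.(\ref{qftn operator}) and Eq.(\ref{eq:Cn-diag}). Everything else is the factorization of an exponential of a sum.
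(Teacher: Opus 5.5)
Your proposal is correct and is essentially the paper's proof: the paper also uses the big-endian expansion $k=\sum_{\ell=1}^n 2^{n-\ell}k_\ell$ and factors $\omega^k$ into the single-qubit phases $e^{2\pi \imath k_\ell/2^\ell}$, which gives $\bigotimes_{\ell} P(2\pi/2^\ell)$. Your explicit choice of the principal branch for $\mathrm{diag}(1,-1)^{1/2^{l-1}}$ is a small clarification that the paper leaves implicit.
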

Thus, we can exactly implement a circulant generator via one call to the QFT operator $\qftn{n}$ and its adjoint, and $n$ single qubit phase gates each acting on one of the $n$ qubits encoding the operator.

\begin{figure}[H]
    \centering
\begin{quantikz}
    & \gate[4]{\qftn{n}}
    & \gate{Z^k}
    & \gate[4]{\qftn{n}^\dagger}
    &  \\
    && \gate{S^k} && \\
    && \gate{T^k} && \\
    &\qwbundle{} & \ \ldots \ &&
    \\
\end{quantikz} $\equiv$ \begin{quantikz}
    &\gate[2]{C_n^k}& \\
    &\qwbundle{} &
\end{quantikz}
\hfill
    \caption{Circuit implementation of the unitary $C_n^k$, using Eq.(\ref{eq:Cn-diag}) and Lemma \ref{lemma:circ-eigvals-as-product}.}
    \label{fig:Cnk}
\end{figure}

An alternative way to obtain the upper and lower shift matrices is a combination of the circulant and skew-circulant generators. In particular, from Eq.(\ref{eq:sum-Cn-Ln}) we notice that, for $k \geq 0$,
\begin{align}
\left(L_n^\top \right)^k = \frac{C_n^k+N_n^k}{2}, \q L_n^k = \frac{C_n^{-k}+N_n^{-k}}{2}.
\end{align}
This representation will be useful for the QFT-Trotter product formula in Section \ref{sec:qft-trotter}. We show in Appendix \ref{app skew-circulant diag} that there is a direct relation between the skew-circulant and the circulant generators that allows us to diagonalise the first one, as
\begin{align}\label{eq:Nn-diag}
N_n^k = \omega^{k/2}\sqrt{\Gamma_n}C_n^k\sqrt{\Gamma_n}^\dagger =  \omega^{k/2}\sqrt{\Gamma_n} \qftn{n}\Gamma_n^k\qftn{n}^\dagger\sqrt{\Gamma_n}^\dagger,
\end{align}
where $\omega^{k/2}$ is a global phase factor. Thus, we can implement in an equivalent way both the unitary operators $C_n^k$ and $N_n^k$.

\section{Approximated block encoding}\label{sec approximated be}

We note that each successive phase gate in Eq.(\ref{eq:gamma-phase-gates}) has a smaller phase and is thus closer to the identity operator. Thus, we can approximate the diagonal matrix by truncating the smaller phases and substituting them by identities. For a system of $n$ qubits, we define the approximation of degree $0\leq d \leq (n-1)$ for the eigenvalues $\Gamma_n$ of the circulant generator $C_n$ as 
\begin{align}\label{eq:gamma-approx}
\widetilde{\Gamma}_{d,n} \equiv \left( \bigotimes_{l=1}^{n-d} P(2\pi/2^l)\right)\otimes \I_2^{\otimes d} = \left( \sum_{k=0}^{2^{n-d}-1}\omega^{k2^d}\ket{k}\bra{k} \right)\otimes \I_2^{\otimes d}.
\end{align}
We see that we recover the exact expression for $d=0$, $\widetilde{\Gamma}_{0,n} =\Gamma_n$. We obtain the error of the approximation of integer powers of $\Gamma_n$, up to a global phase.
\begin{theorem}
    \label{thm:approx-circeigv-distance}
    Let $n, k, d, \delta$ be integers s.t.
    $n \ge 1$,
    $0 \le d \le n-1$,
    $\delta=n-d$,
    and $1 \le |k|\le 2^{\delta}$ with $k$ \underline{odd}.
    Then
    \begin{align}
        \label{eq:approx-circeigv-distance}
        \min_{\phi \in \mathbb{R}} \frac{1}{\sqrt{2^{n+1}}}\left\|
            \widetilde{\Gamma_{d,n}}^{k} - e^{\imath \phi} \circeigv{n}^k
        \right\|_F =&
        \sqrt{1-\frac{\sin(2^d \alpha)}{2^d \sin(\alpha)}}
        \approx
        \eta(\delta, k),
    \end{align}
    with
    \begin{align}
        \label{eq:approx-fun-eta}
        \eta(\delta, k)=&
        \sqrt{
        1-\frac{2^{\delta}}{\pi |k|}\sin\left(
            \frac{\pi |k|}{2^{\delta}}
        \right)},
    \end{align} 
    where $\alpha=\frac{2\pi |k|}{2^{n+1}}$.
    The approximation holds for large $n$ and $1\le |k| \ll 2^{n-d}$.
\end{theorem}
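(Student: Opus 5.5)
The plan is a direct computation. Both $\widetilde{\Gamma}_{d,n}^k$ and $\circeigv{n}^k$ are diagonal in the computational basis, so the Frobenius norm reduces to a sum of scalar distances between corresponding diagonal entries.

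First I split the basis index as $j = a\,2^d + b$, with $0\le a<2^{\delta}$ indexing the $\delta=n-d$ kept qubits and $0\le b<2^d$ indexing the $d$ truncated ones. Here I take the truncated qubits to be the less significant ones, consistent with the right-hand form of Eq.(\ref{eq:gamma-approx}). By Eq.(\ref{eq:gamma-eig}) the exact entry is $\omega^{kj}=\omega^{k a 2^d}\,\omega^{kb}$. By Eq.(\ref{eq:gamma-approx}) the approximate entry is $\omega^{k a 2^d}$. The common unimodular factor drops out of the modulus, so
\begin{align}
\left\|\widetilde{\Gamma}_{d,n}^{k}-e^{\imath\phi}\circeigv{n}^k\right\|_F^2 = 2^{\delta}\sum_{b=0}^{2^d-1}\left|1-e^{\imath\phi}\omega^{kb}\right|^2 = 2^{\delta}\left(2^{d+1}-2\,\repart\Big(e^{\imath\phi}\sum_{b=0}^{2^d-1}\omega^{kb}\Big)\right).
\end{align}
Dividing by $2^{n+1}=2^{\delta}2^{d+1}$ gives $1-2^{-d}\repart\big(e^{\imath\phi}S\big)$, where $S=\sum_b\omega^{kb}$. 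The minimum over $\phi$ is attained when $e^{\imath\phi}S=|S|$, and equals $1-2^{-d}|S|$.

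Next I evaluate the geometric sum $S$. Since $\omega^{k}\neq1$ (because $0<|k|<2^n$), we have $|S| = |\sin(\pi k 2^d/2^n)|/|\sin(\pi k/2^n)|$. In terms of $\alpha=2\pi|k|/2^{n+1}=\pi|k|/2^n$, this is $|\sin(2^d\alpha)|/|\sin\alpha|$, and oddness of the sine lets me replace $k$ by $|k|$. The step I expect to need the most care is removing the absolute values, which is where the hypotheses on $k$ enter. Since $\delta\ge1$, an odd $k$ cannot equal $\pm2^{\delta}$, so $|k|<2^{\delta}$. Then $2^d\alpha=\pi|k|/2^{\delta}\in(0,\pi)$ and $\alpha\in(0,\pi/2^d]\subset(0,\pi)$, so both sines are strictly positive. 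Taking the square root then gives exactly the closed form in Eq.(\ref{eq:approx-circeigv-distance}).

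Finally, for the asymptotic statement, note that $\alpha=\pi|k|/2^n\to0$ for large $n$. Hence $2^d\sin\alpha = 2^d\alpha\,(1+\bigoh(\alpha^2))$, while the numerator depends only on $2^d\alpha=\pi|k|/2^{\delta}$. Substituting $2^d\sin\alpha\approx\pi|k|/2^{\delta}$ yields $\eta(\delta,k)$ of Eq.(\ref{eq:approx-fun-eta}). The relative error is $\bigoh(k^2/4^n)$, which is negligible in the regime $1\le|k|\ll2^{n-d}$ stated in the theorem.
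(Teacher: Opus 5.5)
Your proof is correct. Its skeleton is the paper's: reduce the phase-optimised distance to $\sqrt{1-|\traceop(\widetilde{\Gamma}_{d,n}^{-k}\Gamma_n^{k})|/2^n}$, evaluate that overlap in closed form, then use $\sin\alpha\approx\alpha$. The paper isolates the reduction as a lemma for arbitrary unitaries; you do it entrywise.

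The difference is in how the overlap is evaluated. The paper factorises it over the $d$ truncated qubits as $\prod_{\ell=n-d+1}^{n}|1+e^{2\pi\imath k/2^{\ell}}|$. It turns each factor into $2\cos(\cdot)$ by the half-angle identity, which is where it uses $|k|\le 2^{\delta}$ to keep the cosines nonnegative. It then collapses the product with Morrie's law and treats $k<0$ and $d=0$ as separate cases.

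You instead sum the geometric series $S=\sum_{b=0}^{2^d-1}\omega^{kb}$ directly. Since $\prod_{j=0}^{d-1}(1+z^{2^j})=\sum_{b=0}^{2^d-1}z^b$ with $z=\omega^k$, this is the same quantity, and Morrie's law is just its closed form rewritten.

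What each route buys:
\begin{itemize}
\item Yours is shorter and handles negative $k$ and $d=0$ uniformly.
\item Yours also shows that oddness of $k$ is not needed. The computation works for every integer $1\le|k|\le 2^{\delta}$ except $(d,|k|)=(0,2^n)$, so the even case of Corollary~\ref{cor:approx-circeigv-even} needs no tensor-product reduction.
\item The paper's route displays the error as a product of per-qubit factors $\cos(2^j\alpha)$. This matches the circuit structure and the products reused in Lemma~\ref{lemma:cot-i} and Theorem~\ref{thm:overlap-zi-gamma-pows}.
\end{itemize}

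One caveat, shared with the paper: the $\bigoh(k^2/4^n)$ relative error is that of the sine ratio, not of $\eta$. For $|k|\ll 2^{\delta}$ one has $1-\frac{\sin(2^d\alpha)}{2^d\sin\alpha}\approx\frac{(4^d-1)\alpha^2}{6}$, whereas $\eta^2\approx\frac{4^d\alpha^2}{6}$. So for small $d$ the approximation holds only in absolute terms; at $d=0$ the exact value is $0$ while $\eta>0$.
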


Proof in Appendix \ref{app:approx-circeigv}.
The restriction to odd $k$ can be lifted by exploiting the tensor-product structure of the powers of the circulant generator whose exponent is a power of two.
Since $C_n^{2^c}$ shifts the computational basis by $2^c$ positions,
it acts trivially on the $c$ least significant qubits, that is
\begin{align}
    \label{eq:Cn-pow2-tensor}
    C_n^{2^c} = C_{n-c}\otimes \I_2^{\otimes c},\quad&
    \Gamma_n^{2^c} = \I_2^{\otimes c}\otimes\Gamma_{n-c},
\end{align}
for $0\le c \le n-1$.
\begin{corollary}
    \label{cor:approx-circeigv-even}
    Let $n, d, \delta$ be as in Theorem \ref{thm:approx-circeigv-distance} and let $k=2^c k'$ with $k'$ odd, $0\le c\le \delta-1$ and $1\le |k| \le 2^{\delta}$.
    Then
    \begin{align}
        \label{eq:approx-circeigv-distance-even}
        \min_{\phi \in \mathbb{R}} \frac{1}{\sqrt{2^{n+1}}}\left\|
            \widetilde{\Gamma}_{d,n}^{k} - e^{\imath \phi} \circeigv{n}^k
        \right\|_F
        =&
        \min_{\phi \in \mathbb{R}} \frac{1}{\sqrt{2^{n-c+1}}}\left\|
            \widetilde{\Gamma}_{d,n-c}^{k'} - e^{\imath \phi} \circeigv{n-c}^{k'}
        \right\|_F\\
        =&
        \sqrt{1-\frac{\sin(2^d \alpha)}{2^d \sin(\alpha)}}
        \approx \eta(\delta, k),
    \end{align}
    with $\alpha=\frac{2\pi |k|}{2^{n+1}}$ as in Eq.(\ref{eq:approx-circeigv-distance}).
\end{corollary}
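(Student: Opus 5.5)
The plan is to reduce the even case to the odd case of Theorem \ref{thm:approx-circeigv-distance} through the tensor-product factorisation in Eq.(\ref{eq:Cn-pow2-tensor}), and then check that every quantity in the error formula depends on $k$ only through the ratio $|k|/2^{\delta}$ (equivalently through $\alpha$), which is unchanged by removing the factor $2^c$.

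\textbf{Step 1 (factorising the exact and truncated phases).} I will use the product form of Lemma \ref{lemma:circ-eigvals-as-product}, $\Gamma_n=\bigotimes_{l=1}^n P(2\pi/2^l)$. Then
\begin{align}
\Gamma_n^{2^c k'}=\bigotimes_{l=1}^n P\!\left(2\pi k' 2^{c}/2^{l}\right).
\end{align}
For $l\le c$ the phase is a multiple of $2\pi$, so those factors are identities. For $l>c$ the factor equals $P(2\pi k'/2^{l-c})$. Hence $\Gamma_n^{k}=\I_2^{\otimes c}\otimes\Gamma_{n-c}^{k'}$, consistent with Eq.(\ref{eq:Cn-pow2-tensor}).

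The same argument applies to Eq.(\ref{eq:gamma-approx}). In $\widetilde{\Gamma}_{d,n}^{k}$ the first $c$ factors again become identities; this requires $c\le n-d-1=\delta-1$, which is exactly the hypothesis. The factors with $c<l\le n-d$ become $P(2\pi k'/2^{l-c})$, i.e.\ the first $(n-c)-d$ factors of $\widetilde{\Gamma}_{d,n-c}^{k'}$, and the trailing $d$ identities are unchanged. So
\begin{align}
\widetilde{\Gamma}_{d,n}^{k}=\I_2^{\otimes c}\otimes\widetilde{\Gamma}_{d,n-c}^{k'}.
\end{align}
The main (though mild) obstacle is bookkeeping: the qubit ordering must be the same in both factorisations, so that the identity factors sit on the same $c$ qubits and the truncated register keeps the same $d$ trailing identities.

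\textbf{Step 2 (norm reduction).} For every $\phi$,
\begin{align}
\widetilde{\Gamma}_{d,n}^{k}-e^{\imath\phi}\Gamma_n^k=\I_2^{\otimes c}\otimes\left(\widetilde{\Gamma}_{d,n-c}^{k'}-e^{\imath\phi}\Gamma_{n-c}^{k'}\right).
\end{align}
The identity $\|\I_2^{\otimes c}\otimes X\|_F=\sqrt{2^c}\,\|X\|_F$ turns the prefactor $1/\sqrt{2^{n+1}}$ into $1/\sqrt{2^{n-c+1}}$. Because this holds pointwise in $\phi$, taking the minimum over $\phi$ on both sides gives the first equality of the corollary.

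\textbf{Step 3 (applying the theorem).} I apply Theorem \ref{thm:approx-circeigv-distance} with $n'=n-c\ge1$, the same $d$, $\delta'=n'-d=\delta-c\ge 1$, and odd $k'$. The hypotheses hold: $0\le d\le n'-1$ follows from $c\le\delta-1$, and $1\le|k'|=|k|/2^c\le 2^{\delta-c}=2^{\delta'}$.

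The theorem gives the closed form with $\alpha'=2\pi|k'|/2^{n'+1}=2\pi|k|/2^{n+1}=\alpha$, which is the claimed expression. Likewise $|k'|/2^{\delta'}=|k|/2^{\delta}$, so $\eta(\delta',k')=\eta(\delta,k)$. The asymptotic regime also carries over, since $|k'|\ll 2^{n'-d}$ is equivalent to $|k|\ll 2^{n-d}$. This completes the plan.
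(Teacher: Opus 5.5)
Your proposal is correct and follows the paper's proof. You factor $\widetilde{\Gamma}_{d,n}^{k}-e^{\imath\phi}\Gamma_n^k$ as $\I_2^{\otimes c}\otimes(\cdots)$, absorb the $\sqrt{2^c}$ into the normalisation, and apply Theorem \ref{thm:approx-circeigv-distance} to $(n-c,d,k')$ using $\alpha'=\alpha$ and $\eta(\delta-c,k')=\eta(\delta,k)$. Your Step 1 also makes explicit the factorisation $\widetilde{\Gamma}_{d,n}^{k}=\I_2^{\otimes c}\otimes\widetilde{\Gamma}_{d,n-c}^{k'}$ and where $c\le\delta-1$ is needed; the paper takes this for granted, since Eq.(\ref{eq:Cn-pow2-tensor}) only states it for $\Gamma_n$ and $C_n$.
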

\begin{proof}
    By the tensor decomposition in Eq.(\ref{eq:Cn-pow2-tensor}),
    \begin{align}
        \widetilde{\Gamma}_{d,n}^{k} - e^{\imath\phi}\Gamma_n^{k} =&
        \I_2^{\otimes c}\otimes\big(\widetilde{\Gamma}_{d,n-c}^{k'} - e^{\imath\phi}\Gamma_{n-c}^{k'}\big)
    \end{align}
    whose Frobenius norm is $\sqrt{2^c}$ times that of the second factor, which gives the first equality.
    The second follows from Theorem \ref{thm:approx-circeigv-distance} applied to $(n-c, d, k')$
    noting that $2\pi|k'|/2^{n-c+1}=2\pi|k|/2^{n+1}=\alpha$
    and that $\eta(\delta-c, k')=\eta(\delta,k)$, since $\eta$ depends only on the ratio $|k|/2^{\delta}$.
\end{proof}

We obtain the logarithmic error when $n$ is large, that is
\begin{align}
    \label{eq:log-acirceigv-err}
    \log(\eta(\delta, k)) \sim& -\delta \log\left(\frac{2}{(\pi |k|)^{1/\delta}}\right),
\end{align}
as $\delta \to \infty$.
This shows that the approximation error $\eta(\delta, k)$ (with $k$ odd) in logarithmic scale, is eventually linear w.r.t. $\delta=n-d$.
Notably, the approximation depends on the difference $\delta=n-d$, therefore increasing the system size $n$ and the approximation degree $d$ at the same rate, results in the same approximation error.
Furthermore, due to the unitary invariance of the Frobenius norm and the eigendecomposition in \eqref{eq:Cn-diag}, the result extends naturally to the associated circulant matrices.

Although the expression in Eq.(\ref{eq:gamma-approx}) can directly be translated into a quantum circuit, thus allowing for a direct implementation of the circulant generator, we are actually interested in implementing its exponential. Unfortunately, $\exp\left[ \bigotimes_{l=1}^{n-d}P(2\pi/2^l) \right]$ leads to a highly entangled operator, with no efficient circuit implementation. We consider the approximation $\widetilde{\Gamma}_{d,n}$, which is diagonal and, as such,can be decomposed as a sum of combinations of $\pauliz$ and identity operators. As a matter of fact, in the case of the approximated operator we can do this decomposition efficiently.

\begin{theorem}\label{thm:overlap-zi-gamma-pows}
    For all integers $n\geq 2$ and $d$ s.t. $n-1\geq d \geq 1$, let
    \begin{align}
    \label{eq:local-pauli-string-iz}
    O_{\boldsymbol{p}}=&\left(\bigotimes_{\ell=1}^{n-d} \pauliz^{p_{\ell}}\right) \otimes \idenm{2}^{\otimes d},
    \end{align}
    with $\boldsymbol{p}$ being the vector of coefficients $p_l \in \{0,1\}$ determining a Pauli string in $\{\pauliz,\I_2 \}^{\otimes (n-d)}$. Then, for any odd integer $k\geq 1$
    \begin{align}
        \label{eq:overlap-zi-trace-details}
        \traceop\left(O_{\boldsymbol{p}} \circeigv{n}^k\right)
        =& 
        \frac{1-\xi_k}{2}
        \left(1 + \imath \cot\left(\frac{\pi k}{2^n}\right)\right)
        \prod_{\ell=1}^{n-d}\left(1 + (-1)^{p_{\ell}}\exp\left(\frac{2\pi \imath k}{2^{\ell}}\right)\right),
    \end{align}
    and $\xi_k=\exp(2\pi \imath \,k/2^{n-d})$.
    Notably, the trace vanishes when $p_1=0$, for all odd $k \ge 1$.
\end{theorem}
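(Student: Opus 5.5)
The plan is to exploit the product structure of Lemma \ref{lemma:circ-eigvals-as-product}: since $\Gamma_n=\bigotimes_{l=1}^n P(2\pi/2^l)$, every integer power factorizes as well,
\begin{align}
\Gamma_n^k=\bigotimes_{l=1}^n \begin{pmatrix}1&\\&e^{2\pi\imath k/2^l}\end{pmatrix}.
\end{align}
The string $O_{\boldsymbol{p}}$ is also a tensor product, aligned on the same qubits. So $O_{\boldsymbol{p}}\Gamma_n^k$ is a tensor product of single-qubit diagonal matrices, and its trace is the product of the single-qubit traces. This is the whole computation; the remaining steps only bring the product into the stated closed form.

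First, on each of the first $n-d$ qubits the factor is $\pauliz^{p_l}P(2\pi/2^l)^k=\mathrm{diag}(1,(-1)^{p_l}e^{2\pi\imath k/2^l})$. Its trace is $1+(-1)^{p_l}\exp(2\pi\imath k/2^l)$, which gives exactly the product appearing in Eq.(\ref{eq:overlap-zi-trace-details}).

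Second, on the last $d$ qubits the factor is $P(2\pi/2^l)^k$ with $l=n-d+1,\dots,n$, so it contributes $\prod_{l=n-d+1}^{n}(1+e^{2\pi\imath k/2^l})$. I will recognise this as a geometric sum. Set $\theta=2\pi k/2^n$, so that $e^{2\pi\imath k/2^l}=e^{\imath\theta 2^{n-l}}$ and the exponents $2^{n-l}$ run over $2^0,\dots,2^{d-1}$. Expanding the product in binary then gives
\begin{align}
\sum_{j=0}^{2^d-1}e^{\imath\theta j}=\frac{1-e^{\imath\theta 2^d}}{1-e^{\imath\theta}}=\frac{1-\xi_k}{1-e^{\imath\theta}},
\end{align}
because $e^{\imath\theta 2^d}=\exp(2\pi\imath k/2^{n-d})=\xi_k$. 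The denominator is nonzero since $k$ is odd and $n\ge 2$, so $\theta\notin 2\pi\mathbb{Z}$.

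Third, I rewrite $1/(1-e^{\imath\theta})$ using $1-e^{\imath\theta}=-2\imath\sin(\theta/2)e^{\imath\theta/2}$. This gives
\begin{align}
\frac{1}{1-e^{\imath\theta}}=\frac{\imath e^{-\imath\theta/2}}{2\sin(\theta/2)}=\frac{1}{2}\left(1+\imath\cot(\theta/2)\right),
\end{align}
with $\theta/2=\pi k/2^n$. This yields the prefactor $\frac{1-\xi_k}{2}\left(1+\imath\cot\left(\frac{\pi k}{2^n}\right)\right)$.

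Finally, the vanishing claim: the $l=1$ factor is $1+(-1)^{p_1}e^{\imath\pi k}=1-(-1)^{p_1}$ for odd $k$, which is $0$ when $p_1=0$.

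The only delicate point is bookkeeping. I must check that the qubit ordering in Eq.(\ref{eq:gamma-phase-gates}) (qubit $l=1$ carries phase $-1$, i.e.\ it is the most significant) matches the placement of $\pauliz$'s on the first $n-d$ tensor factors in Eq.(\ref{eq:local-pauli-string-iz}). Under this convention the identity block sits on the least significant qubits with the smallest phases, which is what makes the geometric sum work. Apart from that, the argument is a direct computation.
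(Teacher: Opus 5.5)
Your proof is correct and is essentially the paper's argument. In both, the trace factorizes over the aligned tensor factors, and the identity block contributes $\prod_{\ell=n-d+1}^{n}\left(1+e^{2\pi\imath k/2^\ell}\right)=\frac{1-\xi_k}{1-e^{2\pi\imath k/2^n}}=\frac{1-\xi_k}{2}\left(1+\imath\cot\frac{\pi k}{2^n}\right)$. The paper obtains this in Lemma~\ref{lemma:cot-i} by telescoping $(1-x)\prod_{j}(1+x^{2^j})=1-x^{2^d}$, which is the same identity as your binary-expansion geometric sum. Your explicit checks of the $p_1=0$ vanishing and of the qubit-ordering convention fill in points the paper leaves implicit.
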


Proof in Appendix \ref{app:overlap}. 
The restriction to odd $k$ can again be removed by means of the tensor-product identities in Eq.(\ref{eq:Cn-pow2-tensor}).
\begin{corollary}
    \label{cor:overlap-zi-gamma-pows-even}
    Let $n$, $d$ and $O_{\boldsymbol{p}}$ be as in Theorem \ref{thm:overlap-zi-gamma-pows},
    and let $k\ge 1$ be an integer with $2^n \nmid k$, i.e. $k=2^c k'$ with $k'$ odd and $0\le c\le n-1$.
    Then Eq.(\ref{eq:overlap-zi-trace-details}) holds regardless of the parity of $k$.
    In particular, if $c \le n-d-1$, we recover the result from Eq.(\ref{eq:overlap-zi-trace-details}).
\end{corollary}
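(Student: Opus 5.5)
Rather than reduce to the odd case through Eq.~(\ref{eq:Cn-pow2-tensor}), I would recompute the trace directly from the product structure of Lemma~\ref{lemma:circ-eigvals-as-product}. That derivation never uses the parity of $k$, only that $2^n\nmid k$. Raising Eq.~(\ref{eq:gamma-phase-gates}) to the $k$-th power gives
\begin{align}
\circeigv{n}^k=\bigotimes_{\ell=1}^{n}\begin{pmatrix}1&\\&e^{2\pi\imath k/2^\ell}\end{pmatrix}.
\end{align}
$O_{\boldsymbol{p}}$ is also a tensor product over the same qubit ordering, so the trace factorises qubit by qubit:
\begin{align}
\traceop\left(O_{\boldsymbol{p}}\circeigv{n}^k\right)=\prod_{\ell=1}^{n-d}\left(1+(-1)^{p_\ell}e^{2\pi\imath k/2^\ell}\right)\prod_{\ell=n-d+1}^{n}\left(1+e^{2\pi\imath k/2^\ell}\right).
\end{align}
The first product is exactly the product appearing in Eq.~(\ref{eq:overlap-zi-trace-details}). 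It remains to identify the second product with the prefactor.

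Set $x=e^{2\pi\imath k/2^n}$. Then $e^{2\pi\imath k/2^\ell}=x^{2^{n-\ell}}$, so the second product equals
\begin{align}
\prod_{j=0}^{d-1}\left(1+x^{2^j}\right)=\sum_{m=0}^{2^d-1}x^m.
\end{align}
This is the usual binary expansion identity. Since $2^n\nmid k$, we have $x\neq 1$, and the geometric sum gives $(1-x^{2^d})/(1-x)=(1-\xi_k)/(1-x)$, because $x^{2^d}=e^{2\pi\imath k/2^{n-d}}=\xi_k$. Writing $\theta=2\pi k/2^n$, the elementary identity
\begin{align}
\frac{1}{1-e^{\imath\theta}}=\frac{1}{2}\left(1+\imath\cot(\theta/2)\right)
\end{align}
holds for $\theta\notin 2\pi\mathbb{Z}$; it follows from $1-e^{\imath\theta}=-2\imath\sin(\theta/2)e^{\imath\theta/2}$. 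This reproduces the prefactor $\frac{1-\xi_k}{2}\bigl(1+\imath\cot(\pi k/2^n)\bigr)$. Hence Eq.~(\ref{eq:overlap-zi-trace-details}) holds for every $k$ with $2^n\nmid k$, independently of parity.

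For the final remark, take $c\le n-d-1$. One can then check consistency with the tensor route: by Eq.~(\ref{eq:Cn-pow2-tensor}), $\circeigv{n}^{k}=\I_2^{\otimes c}\otimes\circeigv{n-c}^{k'}$. The factors $\ell\le c$ have $e^{2\pi\imath k/2^\ell}=1$, so they contribute $1+(-1)^{p_\ell}\in\{0,2\}$. Applying Theorem~\ref{thm:overlap-zi-gamma-pows} to $(n-c,d,k')$ then gives the same expression, since $k'/2^{n-c-\ell'}=k/2^{n-\ell}$; the condition $n-c\ge d+1$ keeps that theorem applicable.

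The one step needing care is the qubit ordering convention. I must check that in Lemma~\ref{lemma:circ-eigvals-as-product} the factor $P(2\pi/2^\ell)$ sits on the same qubit as $\pauliz^{p_\ell}$ in Eq.~(\ref{eq:local-pauli-string-iz}), so that the truncated identities fall on the smallest phases $\ell>n-d$. This is consistent with Eq.~(\ref{eq:gamma-approx}), but it should be verified so that the split into the two products is correct. A secondary point is the case $c\ge n-d$. There some factor of the first product has $e^{2\pi\imath k/2^\ell}=1$ and may vanish, or $\xi_k=1$ makes the whole expression zero. The direct computation covers this case anyway, so no separate argument is needed.
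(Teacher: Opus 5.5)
Your proof is correct, but it takes a different route from the paper.

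\textbf{The paper's route.} It does not recompute the trace. It writes $\circeigv{n}^k=\I_2^{\otimes c}\otimes\circeigv{n-c}^{k'}$ via Eq.~(\ref{eq:Cn-pow2-tensor}) and splits $O_{\boldsymbol{p}}=O^{(1)}_{\boldsymbol{p}}\otimes O^{(2)}_{\boldsymbol{p}}$ across the first $c$ qubits. It then uses $\traceop(O^{(1)}_{\boldsymbol{p}})\in\{0,2^c\}$ and applies Theorem~\ref{thm:overlap-zi-gamma-pows} as a black box to $(n-c,d,k')$. Finally it matches $\xi'_{k'}=\xi_k$, $\cot(\pi k'/2^{n-c})=\cot(\pi k/2^n)$ and the reindexed product.

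\textbf{Your route.} You rerun the qubit-by-qubit factorisation that underlies the theorem itself. You observe that the oddness of $k$ enters only through $e^{2\pi\imath k/2^n}\neq 1$, which is exactly the condition $2^n\nmid k$. Your geometric-sum and half-angle evaluation of the last $d$ factors then replaces Lemma~\ref{lemma:cot-i}.

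\textbf{What each buys.} The paper's argument is shorter once the theorem is available, and it mirrors Corollary~\ref{cor:approx-circeigv-even}. However, it only works literally when the theorem's hypothesis $d\le (n-c)-1$ holds, i.e.\ $c\le n-d-1$. Its last display also has the $2^c$ on the wrong side; it should read $\prod_{\ell=1}^{n-d}=2^c\prod_{\ell=1}^{n-c-d}$. This is harmless after multiplying by $\traceop(O^{(1)}_{\boldsymbol{p}})=2^c$. Your argument is uniform in $0\le c\le n-1$, so it proves the ``regardless of parity'' claim on the whole stated range. It even covers negative $k$ with $2^n\nmid k$.

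\textbf{Two small points.}
\begin{itemize}
\item You can state the degenerate case $c\ge n-d$ more sharply than ``may vanish, or''. There $\xi_k=1$ always, so the right-hand side is identically zero. On the left, the factor $\ell=c+1$ of your second product is $1+e^{\imath\pi k'}=0$.
\item The qubit-ordering point you flagged does check out. The proof of Lemma~\ref{lemma:circ-eigvals-as-product} puts the most significant bit $k_1$ on the first tensor factor, with $P(2\pi/2^1)=\pauliz$. This matches $\pauliz^{p_1}$ in Eq.~(\ref{eq:local-pauli-string-iz}), so the identities in $O_{\boldsymbol{p}}$ sit on the factors with $\ell>n-d$, as your split requires.
\end{itemize}
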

\begin{proof}
   We can rewrite Eq.(\ref{eq:local-pauli-string-iz}) as
    \begin{align}\label{eq:overlap-k-odd}
    O_{\boldsymbol{p}} =  \bigotimes_{l=1}^c\pauliz^{p_l} \otimes \left(\left( \bigotimes_{l=c+1}^{n-d}\pauliz^{p_l}\right)\otimes \I_2^{\otimes d}\right)\equiv O_{\boldsymbol{p}}^{(1)}\otimes O_{\boldsymbol{p}}^{(2)}.
    \end{align}
Inserting this relation in the left hand side of Eq.(\ref{eq:local-pauli-string-iz}) we obtain 
    \begin{align}
        \traceop\left( O_{\boldsymbol{p}}\circeigv{n}^k\right) = \traceop\left( O_{\boldsymbol{p}}^{(1)}\right) \traceop\left( O_{\boldsymbol{p}}^{(2)}\circeigv{n-c}^{k^\prime}\right),
    \end{align}
where
\begin{align}
    \traceop\left( O_{\boldsymbol{p}}^{(1)}\right) = \begin{cases}
        2^c \q \text{for } p_1 = p_2 = \cdots = p_c = 0\\ 0 \q \text{otherwise}
    \end{cases}.
\end{align}
Applying Theorem \ref{thm:overlap-zi-gamma-pows} to the second term in the right hand side of Eq.(\ref{eq:overlap-k-odd}) we obtain
\begin{align}
    \traceop\left( O_{\boldsymbol{p}}^{(2)}\circeigv{n-c}^{k^\prime}\right) = \frac{1-\xi^\prime_{k^\prime}}{2}\left( 1+i\cot\left( \frac{\pi k^\prime}{2^{n-c}}\right)\right) \prod_{l=1}^{n-c-d}\left[ 1+(-1)^{p_{l+c}}\exp\left(  \frac{2\pi k^\prime}{2^l}\right) \right],
\end{align}
where $\xi^\prime_{k^\prime} = \exp\left( 2\pi i k^\prime /2^{n-c-d}\right) = \exp\left( 2\pi i k2^{-c}/2^{n-c-d}\right) = \xi_k$ and $\cot(\pi k^\prime/2^{n-c}) = \cot(\pi k/2^n)$. Finally, we see that for $p_1 = p_2 = \cdots = p_c = 0$
\begin{align}
    \prod_{l=1}^{n-c-d}\left[ 1+(-1)^{p_{l+c}}\exp\left(  \frac{2\pi k^\prime}{2^l}\right) \right] = 2^c \prod_{l=1}^{n-d}\left[ 1+(-1)^{p_{l}}\exp\left(  \frac{2\pi k}{2^l}\right) \right].
\end{align}
\end{proof}

As a corollary, we can show that the coefficients in Eq.(\ref{eq:gamma-decomp}) decay rapidly. Assuming $k/2^n = \mathcal{O}(1/2^n)$ 
\begin{align}
    \frac{\abs{\traceop(O_{\boldsymbol{p}}\Gamma_n^k)}}{2^n} =&
    \gamma
    2^{n-d}\sin\left(\frac{\pi k}{2^{n-d}}\right)
    \cdot
    \prod_{\ell=1}^{n-d}\sqrt{\frac{1 + (-1)^{p_{\ell}}\cos(2\pi k/2^{\ell})}{2}} \approx
    \frac{1-\xi_k}{2\pi k}
    \prod_{\ell=1}^{n-d}\left(1 + (-1)^{p_{\ell}}\exp\left(\frac{2\pi \imath k}{2^{\ell}}\right)\right),
\end{align}
for some constant $\gamma > 0$. Proof in Appendix \ref{app:overlap}. The approximation follows from Lemma \ref{lemma:cot-i} when $n$ is sufficiently large. Notice that for $l \to \infty$, $\cos(2\pi k/2^l)\to 1$, and thus all Pauli factors get exponentially close to the identity operator $\I_2$ for large $l$. This justifies the expression in Eq.(\ref{eq:local-pauli-string-iz}). In particular, the constraint that bounds the scaling of $k/2^n$ can be interpreted as enforcing a banded structure on the corresponding block-encoded Toeplitz matrix.

In principle, in order to have a good enough approximation we would need to take into account each possible configuration of $\boldsymbol{p}$ in Eq.(\ref{eq:local-pauli-string-iz}), which are $2^{n-d}$, making the decomposition not very efficient. Nevertheless, in practice we can always avoid using all the subspace of the Pauli strings and consider only a subset $\Omega$ of the possible configurations, mainly because of two reasons. Firstly, we see from Eq.(\ref{eq:overlap-zi-trace-details}) that the coefficients decay rapidly for large $l$. Secondly, the pattern of which configuration $\boldsymbol{p}$ dominates is the same no matter the number of qubits, and can thus be known a priori, without the need to exhaust all the possible configurations.

Thus, we can approximate the diagonal operator $\Gamma_n$ into the subspace spanned by the Pauli strings, as
\begin{align}\label{eq:gamma-decomp}
    \Gamma_n^k \approx \widetilde{\Gamma}_{d,n}^k \approx \sum_{\boldsymbol{p}\in \Omega} \frac{\traceop(O_{\boldsymbol{p}}\Gamma_n^k)}{2^n}O_{\boldsymbol{p}}.
\end{align}
The expression in Eq.(\ref{eq:local-pauli-string-iz}) can be linearly extended to approximately decompose a general circulant matrix $\sum_{k=-m}^m \alpha_k C_n^k$, as the one in Eq.(\ref{eq:circulant-be-toeplitz}), in terms of the Pauli string. We consider here only part with odd powers, but the rest can be done equivalently. We can obtain the coefficients of the expansion as
\begin{align}
    g_{\boldsymbol{p}} \coloneqq&
    \frac{1}{2^n}
    \traceop\left(O_{\boldsymbol{p}}
        \sum_{k\ge 1,\, 2\nmid k}^m \alpha_k\circeigv{n}^k
    \right) =
    \sum_{k\ge 1,\, 2\nmid k} \alpha_k  \frac{\traceop\left(O_{\boldsymbol{p}} \circeigv{n}^k\right)}{2^n}.
\end{align}
Because now we have the diagonal matrix $\Gamma_n$ approximately decomposed in terms of commuting Pauli strings, its exponentiation can be written as a product of the exponential of each of the terms.
Thus, combining Theorem \ref{thm:overlap-zi-gamma-pows} and the eigendecomposition of the circulant matrices from Eq.(\ref{eq:Cn-diag}), we can easily implement the approximated exponentiation of the skew-hermitian part of the circulant matrix. This is
\begin{align}\label{eq:exp-diag-op}
    \exp\left(
        \sum_{k\ge 1,\, 2\nmid k}^m \alpha_k \skewhpart(C_n^k)
    \right)
    =&
    \qftn{n}
    \exp\left(
        \sum_{k\ge 1,\, 2\nmid k}^m \alpha_k\impart\left(\circeigv{n}^k\right)
    \right)
    \qftdgn{n}
    \approx
    \qftn{n}
    \left(
    \prod_{\boldsymbol{p} \in \Omega} \exp\left(
        \impart\left( g_{\boldsymbol{p}} O_{\boldsymbol{p}}
    \right)\right)\right)
    \qftdgn{n},
\end{align}
where the exponential of the Pauli strings can be efficiently implemented \cite{NielsenChuang}, as depicted in Figure \ref{fig:circuit-exp}.

Thus, we are able to approximately exponentiate diagonal matrices. But, as stated at the beginning, our goal was to obtain the exponential of Toeplitz matrices. We consider now the case in which we do not have any more information about the shape of the exponentiation of the Toeplitz, which for most applications is the general case.


\begin{figure}[H]
\centering
\begin{quantikz}[column sep=0.3cm]
\lstick[5]{$\ket{\Psi}$}
    & \gate[5]{\qftn{n}}
    & \gate{R_Z(g_{\boldsymbol{p_0}})} &\ctrl{2}&&&& \ctrl{2}&\ctrl{3}&&&& \ctrl{3}& \ \ldots \ & \gate[5]{\qftn{n}^\dagger}
    &  \\
 &&&&\ctrl{1}&&\ctrl{1}&&&\ctrl{2}&&\ctrl{2}&& \ \ldots \ && \\
&&&\targ{}& \targ{}&\gate{R_Z(g_{\boldsymbol{p_1}})}&\targ{}&\targ{}&&&&&& \ \ldots \ && \\
&&&&& &&&\targ{}&\targ{}&\gate{R_Z(g_{\boldsymbol{p_2}})}&\targ{}&\targ{}& \ \ldots \ && \\
&& \qwbundle{} &&&&&& \ \ldots \ &&&&&&&
\end{quantikz}
\caption{Structure of the circuit applying an approximation to the $\exp\left(\sum_{k\geq 1,2\nmid k}\alpha_k \skewhpart(C_n^k)\right)$ operator, Eq.(\ref{eq:exp-diag-op}), to a state $\ket{\Psi}$ on $n$ qubits. The exponentiation of the sum of commuting Pauli strings is reduced to the product of Pauli-string rotations $e^{g_{\boldsymbol p}O_{\boldsymbol p}}$, with $R_Z(\theta)=e^{\imath\theta\pauliz}$. Note that for strings of weight $>1$ (e.g.\ $\pauliz\otimes\pauliz$) the rotation is \emph{not} a tensor product of single-qubit gates: each layer is implemented by the standard CNOT-ladder conjugation of one $R_Z$ \cite{NielsenChuang}. The number of layers depends on the degree $d$ of the truncation (see Theorem \ref{thm:approx-circeigv-distance}).}\label{fig:circuit-exp}
\end{figure}

\section{The QFT-Trotter product formula}\label{sec:qft-trotter}

An arbitrary Toeplitz matrix is not necessarily diagonalisable. Nevertheless, we can always write them as sums of circulant and block-circulant matrices, which are  always diagonalisable. As an illustration, consider
\begin{align}
    \label{eq:simple-toeplitz-from-circ-and-negac}
    T =&
    \begin{pmatrix}
        0 & 1 &        &        &   \\
          & 0 & 1      &        &   \\
          &   & \ddots & \ddots &   \\
          &   &        & 0      & 1 \\
        0 &   &        &        & 0
    \end{pmatrix}
    =
    \frac{1}{2}
    \begin{pmatrix}
        0 & 1 &        &        &   \\
          & 0 & 1      &        &   \\
          &   & \ddots & \ddots &   \\
          &   &        & 0      & 1 \\
        1 &   &        &        & 0
    \end{pmatrix}
    +
    \frac{1}{2}
    \begin{pmatrix}
        0  & 1 &        &        &   \\
           & 0 & 1      &        &   \\
           &   & \ddots & \ddots &   \\
           &   &        & 0      & 1 \\
        -1 &   &        &        & 0
    \end{pmatrix} = \frac{1}{2}C_n + \frac{1}{2}N_n,
\end{align}
where $C_n$ and $N_n$ are the cyclic permutation and negacylic permutation matrices defined in Eqs.(\ref{cyclic permutation}) and (\ref{skew-circulant generator}), respectively. As we saw in the previous section, although we cannot  directly implement an exponentiation of the Toeplitz $T$ in Eq.(\ref{eq:simple-toeplitz-from-circ-and-negac}) since it is not diagonalisable,  we can efficiently implement an approximation to the exponentiation of each of the matrices in the decomposition, $C_n$ and $N_n$, which are diagonalisable (see Eqs.(\ref{eq:Cn-diag}) and (\ref{eq:Nn-diag})). Thus, we can apply the Lie-Trotter product formula \cite{Hall2015,MolerVanLoan1978} to leverage the results from the previous section, and obtain an approximation to the exponentiation of a general banded Toeplitz matrix. Formally, we can write
\begin{lemma}\label{lemma:banded-T-D1-D2}
    Let $T$ be the banded Toeplitz operator defined in Eq.(\ref{eq:toeplitz-poly}), given by the coefficients $\{\alpha_i\}$ and bandwidth $m\ll 2^n$. Let $D_1$ and $D_2$ be diagonal operators on $n$-qubits, defined as
\begin{align}\label{eq:D1-D2}
    D_1 \equiv \sum_{k=-m}^m\alpha_k\Gamma_n^k,\q D_2\equiv \sum_{k=-m}^m \alpha_k \omega^{k/2}\Gamma_n^k.
\end{align}
Then,
\begin{align}\label{eq:new-trotter-lemma-first}
    T = \frac{1}{2}\qftn{n}D_1\qftn{n}^\dagger +\frac{1}{2}\sqrt{\Gamma_n}\qftn{n}D_2\qftn{n}^\dagger \sqrt{\Gamma_n}^\dagger,
\end{align}
where $\omega = \exp\left( 2\pi i/2^n\right)$.
\end{lemma}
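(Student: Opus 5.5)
The plan is to show that $T=\frac12\sum_{k=-m}^m\alpha_k\left(C_n^k+N_n^k\right)$, and then substitute the diagonalisations Eq.(\ref{eq:Cn-diag}) and Eq.(\ref{eq:Nn-diag}) term by term.

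\emph{Step 1.} Fix $k$ with $0\le k\le m<2^n$. Then $q=0$ and $r=k$, and Eq.(\ref{eq:sum-Cn-Ln}) gives $\frac{C_n^k+N_n^k}{2}=\left(L_n^\top\right)^k$.

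\emph{Step 2.} Fix $k=-j$ with $1\le j\le m$. Here $q=-1$ and $r=2^n-j$. The coefficient of $\left(L_n^\top\right)^r$ is $\frac{1+(-1)^{-1}}{2}=0$, and the coefficient of $L_n^{2^n-r}=L_n^{j}$ is $\frac{1+(-1)^{0}}{2}=1$. Hence $\frac{C_n^{-j}+N_n^{-j}}{2}=L_n^j$, which is the identity already recorded after Lemma \ref{lemma:circ-eigvals-as-product}. The case $k=0$ is covered by Step 1 and gives $\I$.

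\emph{Step 3.} Multiply by $\alpha_k$ and sum over $k\in[-m,m]$. Comparing with Eq.(\ref{eq:toeplitz-poly}) yields $T=\frac12\sum_k\alpha_kC_n^k+\frac12\sum_k\alpha_kN_n^k$.

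\emph{Step 4.} Substitute Eq.(\ref{eq:Cn-diag}) into the first sum. Since $\qftn{n}$ is fixed, linearity gives $\sum_k\alpha_kC_n^k=\qftn{n}D_1\qftn{n}^\dagger$. For the second sum, substitute Eq.(\ref{eq:Nn-diag}) for each $k$. The outer factors $\sqrt{\Gamma_n}\qftn{n}$ and $\qftn{n}^\dagger\sqrt{\Gamma_n}^\dagger$ do not depend on $k$, so they can be pulled out. The scalars $\omega^{k/2}$ stay attached to $\Gamma_n^k$ inside, which gives exactly $D_2$ and proves Eq.(\ref{eq:new-trotter-lemma-first}).

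The main obstacle is the $k$-dependent phase $\omega^{k/2}$, which is only called a ``global phase'' in Eq.(\ref{eq:Nn-diag}). It is global for a single power but not across the linear combination, so it must be kept inside $D_2$. I would also check that Eq.(\ref{eq:Nn-diag}) holds for negative $k$ with a consistent branch, namely $\omega^{k/2}=e^{\pi i k/2^n}$. To verify this, I would derive it from $N_n=\omega^{1/2}\sqrt{\Gamma_n}C_n\sqrt{\Gamma_n}^\dagger$, proved in Appendix \ref{app skew-circulant diag}, by taking integer powers: the conjugation telescopes and the scalar becomes $(\omega^{1/2})^k$, including for negative $k$ via the inverse. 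I would also confirm that the range condition $m<2^n$ is what forces the wraparound terms in Steps 1 and 2 to vanish.
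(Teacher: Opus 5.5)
Your proposal is correct and follows essentially the same route as the paper: it uses Eq.(\ref{eq:sum-Cn-Ln}) with $m<2^n$ to write $T=\frac12\sum_{k=-m}^m\alpha_k\left(C_n^k+N_n^k\right)$, then substitutes Eq.(\ref{eq:Cn-diag}) and Eq.(\ref{eq:Nn-diag}) and pulls out the $k$-independent outer factors. Your explicit checks are details the paper leaves implicit, and they are correct: the $q=-1$ case for negative powers, and the validity of Eq.(\ref{eq:Nn-diag}) for negative $k$ obtained by powering $N_n=\omega^{1/2}\sqrt{\Gamma_n}C_n\sqrt{\Gamma_n}^\dagger$, which keeps $\omega^{k/2}=e^{\pi i k/2^n}$ inside $D_2$.
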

Proof in Appendix \ref{app:banded-toeplitz}. Notice that the left hand side is not necessarily a normal operator, whereas the right hand side is a linear combination of normal operators whose basis do not commute.

 We consider an implementation by the  approximated eigenvalues of the circulant generator, given in Eq.(\ref{eq:gamma-approx}), such that we define the approximated version of the diagonal operators in Eq.(\ref{eq:D1-D2}) as
\begin{align}\label{eq:D1-D2-app}
    \widetilde{D}_1 \equiv \sum_{k=-m}^m\alpha_k\widetilde{\Gamma}_{d,n}^k,\q \widetilde{D}_2\equiv \sum_{k=-m}^m \alpha_k \omega^{k/2}\widetilde{\Gamma}_{d,n}^k.
\end{align}

We can now apply the Lie-Trotter product formula.
\begin{theorem}[QFT-Trotter product formula]\label{thm:qft-trotter}
    Let
\begin{align}
    T = \sum_{k=1}^m \left[ \alpha_{-k}L_n^k +\alpha_k\left( L_n^\top \right)^k\right] + \alpha_0\I
\end{align}
be a banded Toeplitz operator. Let $n, d, \delta$ be integers s.t. $n \ge 1$, $0 \le d \le n-1$, $\delta=n-d$. Then
\begin{align}
   \norm{e^{Tt} - \left( \widetilde{A}^{t/c}\widetilde{B}^{t/c}\right)^c} = \mathcal{O}\left(\frac{t^2}{c}\norm{\alpha}_1^2 \right) +  \mathcal{O}\left( \frac{\pi t m}{2^\delta}\norm{\alpha}_1\right),
\end{align}
with
\begin{align}\label{eq:A-B}
    \widetilde{A}=&\qftn{n}\exp\left( \frac{\widetilde{D}_1}{2}\right)\qftn{n}^\dagger,\q
    \widetilde{B}=\sqrt{\Gamma_n}\qftn{n}\exp\left( \frac{\widetilde{D}_2}{2}\right)\qftn{n}^\dagger\sqrt{\Gamma_n}^\dagger,
\end{align}
where $\widetilde{D}_1$ and $\widetilde{D}_2$ are given by Eq.(\ref{eq:D1-D2-app}), $\norm{\alpha}_1 \equiv \sum_{k=-m}^m\abs{\alpha_k}$, and we have assumed that $\max_j \repart \widetilde{D}_{1,jj} \leq 0$ and $\max_j \repart \widetilde{D}_{2,jj} \leq 0$.
\end{theorem}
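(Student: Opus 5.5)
We split the error with the triangle inequality into a Trotter part and a truncation part. Write $A=\qftn{n}e^{D_1/2}\qftn{n}^\dagger$ and $B=\sqrt{\Gamma_n}\qftn{n}e^{D_2/2}\qftn{n}^\dagger\sqrt{\Gamma_n}^\dagger$ for the exact (untruncated) factors. Set $X=\tfrac12\qftn{n}D_1\qftn{n}^\dagger$ and $Y=\tfrac12\sqrt{\Gamma_n}\qftn{n}D_2\qftn{n}^\dagger\sqrt{\Gamma_n}^\dagger$. By Lemma \ref{lemma:banded-T-D1-D2} we have $T=X+Y$, and $A^{s}=e^{sX}$, $B^{s}=e^{sY}$. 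Then
\begin{align}
\norm{e^{Tt}-(\widetilde A^{t/c}\widetilde B^{t/c})^c}\le \norm{e^{(X+Y)t}-(e^{Xt/c}e^{Yt/c})^c}+\norm{(e^{Xt/c}e^{Yt/c})^c-(\widetilde A^{t/c}\widetilde B^{t/c})^c}.
\end{align}

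For the first term we use the standard Lie--Trotter estimate for non-commuting generators, valid for general (non-normal) matrices. The one-step error is $\norm{e^{(X+Y)s}-e^{Xs}e^{Ys}}\le \tfrac{s^2}{2}\norm{[X,Y]}e^{s(\norm X+\norm Y)}$. We telescope over $c$ steps, bounding all intermediate products by $e^{t(\norm X+\norm Y)}$, or better by contractivity. Since $X$ and $Y$ are unitarily similar to diagonals, $\norm X\le \tfrac12\norm{D_1}\le\tfrac12\norm\alpha_1$ because $|\omega^{jk}|=1$, and likewise $\norm Y\le\tfrac12\norm\alpha_1$. Hence $\norm{[X,Y]}\le\tfrac12\norm\alpha_1^2$, which gives $\mathcal O(t^2\norm\alpha_1^2/c)$.

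The stability prefactor is the delicate point. The hypothesis $\repart D_{jj}\le0$ makes $e^{sD/2}$ a diagonal contraction, so each of $A^s,B^s$ (and the tilde versions) is a contraction, being a unitary conjugate of a contraction. The products of factors therefore have norm at most one. The exponential $e^{(X+Y)\tau}$ is not obviously contractive, since $X+Y$ is non-normal. I would handle it with the variation-of-constants telescoping $e^{(X+Y)t}-P^c=\sum_j e^{(X+Y)(t-(j+1)s)}(e^{(X+Y)s}-P)P^j$ with $P=e^{Xs}e^{Ys}$. The first factor of each summand I would bound using the Hermitian part of $X+Y$. Since $\hermpart(X)$ and $\hermpart(Y)$ are negative semidefinite (they are unitary conjugates of $\repart D/2$), $\hermpart(X+Y)\le0$, and the exponential is contractive. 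This is the step I expect to need the most care: I must check that the assumption stated for $\widetilde D$ transfers to $D$, or else run the argument with $\widetilde X,\widetilde Y$ directly. The latter is cleaner, because then only contractivity of the tilde factors is used.

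For the second term, telescoping products of contractions gives $\norm{P^c-\widetilde P^c}\le c\,(\norm{A^{s}-\widetilde A^{s}}+\norm{B^{s}-\widetilde B^{s}})$ with $s=t/c$. By unitary invariance, $\norm{A^s-\widetilde A^s}=\norm{e^{sD_1/2}-e^{s\widetilde D_1/2}}\le \tfrac s2\norm{D_1-\widetilde D_1}$. This uses the contraction bound $\norm{e^{M}-e^{N}}\le\norm{M-N}$ when both diagonals have nonpositive real part, applied entrywise. Next, $\norm{D_1-\widetilde D_1}\le\sum_k|\alpha_k|\max_j|\omega^{jk}-\widetilde\omega_j^{k}|$. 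From \eqref{eq:gamma-approx}, the entries differ by the phase $\omega^{k(j\bmod 2^d)}$, and $|1-e^{i\theta}|\le|\theta|$. The relative eigenphase difference is then at most $2\pi|k|2^d/2^n=2\pi|k|/2^\delta$, up to a global phase. Using $|k|\le m$ gives $\norm{D_1-\widetilde D_1}=\mathcal O(\pi m\norm\alpha_1/2^\delta)$, and identically for $D_2$. Multiplying by $c\cdot s/2=t/2$ yields the second term $\mathcal O(\pi t m\norm\alpha_1/2^\delta)$. The global-phase minimisation of Theorem \ref{thm:approx-circeigv-distance} is not needed here, since a direct entrywise phase bound suffices. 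Its $\eta(\delta,k)\sim\pi|k|/2^\delta$ behaviour is consistent with this rate.
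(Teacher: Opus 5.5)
Your argument follows the paper's proof. It uses the same triangle split into a Trotter term with the exact factors $A,B$ and a truncation term, the same bound $\norm{[X,Y]}\le\frac12\norm{\alpha}_1^2$, and the same telescoping of contractions down to $\frac{t}{2c}\norm{D_1-\widetilde D_1}$. You are more careful than the paper in two places. First, you justify contractivity of $e^{Tt}$ through $\hermpart(X+Y)\preceq 0$, where the paper just quotes a prefactor. Second, your entrywise bound $\abs{1-\omega^{kr}}\le 2\pi\abs{k}/2^\delta$ (for $j=2^dq+r$) correctly replaces the paper's appeal to the operator-norm lemma. That lemma's optimal phase depends on $k$, so it cannot be dropped inside $\sum_k\alpha_k\Gamma_n^k$; your version costs only a factor $2$.

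The genuine gap is the step you flagged, and neither of your two ways of closing it works. Let $p(z)=\sum_{k=-m}^m\alpha_kz^k$. Then $D_{1,jj}=p(\omega^{j})$ and $D_{2,jj}=p(\omega^{j+1/2})$, while $\widetilde D_{1,jj}=p(\omega^{2^dq})$ and $\widetilde D_{2,jj}=p(\omega^{2^dq+1/2})$. So the truncated diagonals sample the symbol on a \emph{subset} of the points sampled by $D_1,D_2$.

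The implication therefore runs the other way: $\repart D\le 0\Rightarrow\repart\widetilde D\le 0$. The stated hypothesis gives no control of $A^{s}$, $B^{s}$ or $e^{Tt}$. Your truncation estimate needs this control too: it uses $\norm{A^s},\norm{B^s}\le 1$, and the mean-value bound on $e^{sD_1/2}-e^{s\widetilde D_1/2}$ needs $\repart D_1\le 0$.

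This is not cosmetic. Take $n\ge 3$, $d=n-1$, $\alpha_{\pm 2}=-M/2$ and all other $\alpha_k=0$. Then $\widetilde D_1=-M\I$ and $\widetilde D_2=-M\cos(2\pi/2^n)\I$, so the hypotheses hold and $(\widetilde A^{t/c}\widetilde B^{t/c})^c$ is a scalar contraction. Yet $T=-\frac M2\bigl(L_n^2+(L_n^\top)^2\bigr)$ is real symmetric with largest eigenvalue $M\cos(\pi/(2^{n-1}+1))\ge 0.8M$. The error is therefore at least $e^{0.8Mt}-1$, while the claimed bound is $\mathcal{O}(t^2M^2/c+tM)$.

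Your fallback cannot avoid this. Trotterising $\widetilde X,\widetilde Y$ only controls the distance to $e^{\widetilde T t}$, with $\widetilde T=\widetilde X+\widetilde Y$. Comparing with $e^{Tt}$ via Duhamel then needs a bound on $\norm{e^{Tu}}$, which is not a property of the tilde factors. From $\hermpart(T)\preceq\epsilon\I$, with $\epsilon=\norm{T-\widetilde T}\le 2\pi m\norm{\alpha}_1/2^\delta$, you only get $e^{\epsilon t}-1$. That is linear in $\pi t m\norm{\alpha}_1/2^\delta$ only when this quantity is $\mathcal{O}(1)$.

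The clean fix is to assume $\max_j\repart D_{1,jj}\le 0$ and $\max_j\repart D_{2,jj}\le 0$, i.e.\ $\repart p\le 0$ at all $2^{n+1}$-th roots of unity. This is exactly what the paper's own proof uses, despite the statement. It implies the tilde conditions, and with it your first route is complete as written.
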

Proof in Appendix \ref{app:qft-trotter}. 
Because the arguments of $\exp(\cdot)$ in Eq.(\ref{eq:A-B}) are diagonal, we can implement their approximate exponentiation using the results of the previous section. This allows us to obtain an approximation to $\exp(T)$, where $T$ is a banded Toeplitz matrix, in the general case, in which we only know information about the coefficients $\alpha_i$. 


When additional structure of the banded Toeplitz matrix is known the generic Trotter approach can be substantially improved. In particular, we consider the exponentiation of an operator arising from the discretisation of a differential operator, that can be well approximated by an operator with a much narrower spectral support.  

\section{Application to the discretised heat equation}\label{sec:heat-equation}

As an example of such encoding, we consider the case of the one-dimensional heat equation,$\frac{\partial \boldsymbol{u}}{\partial t} = \frac{\partial^2 \boldsymbol{u}}{\partial x^2}$. Discretising the spatial coordinate by a grid of separation $\delta_x$ and a total length $L$, the equation can be rewritten as
\begin{align}\label{eq:discr-heat}
    \frac{\partial}{\partial t}\boldsymbol{u}(t) = \Delta \boldsymbol{u}(t),
\end{align}
where $\Delta$ is the discretised Laplace operator. The matrix form depends on the boundary conditions; we consider periodic (p), Dirichlet (D) and Neumann (N) boundary conditions. The corresponding matrices are
\begin{align}\label{eq:Lap-mat}
\Delta_p = \frac{1}{\delta_x^2}\begin{pmatrix}
    -2 & 1 & 0 & \hdots & 1 \\
    1 & -2 & 1 & \hdots & 0 \\
    \vdots \\ \\
    1 & 0 & \hdots & 1 & -2
\end{pmatrix}, \q \Delta_D = \frac{1}{\delta_x^2}\begin{pmatrix}
    -2 & 1 & 0 & \hdots & 0 \\
    1 & -2 & 1 & \hdots & 0 \\
    \vdots \\ \\
    0 &  & \hdots & 1 & -2
\end{pmatrix}, \q \Delta_N = \frac{1}{\delta_x^2}\begin{pmatrix}
    -1 & 1 & 0 & \hdots & 0 \\
    1 & -2 & 1 & \hdots & 0 \\
    \vdots \\
    &  \hdots &1 & -2 & 1\\
    0 &  & \hdots & 1 & -1
\end{pmatrix},
\end{align}
where $\delta_x = L/2^n$.
For $\boldsymbol{b}(t)=0$, the solution to Eq.(\ref{eq:discr-heat}) is $\boldsymbol{u}(t) = \exp\left( \Delta_i t\right)\boldsymbol{u}(0)$. We could thus implement the block encoding of these matrices with the methods presented of the previous section or different methods \cite{BELap,Laplacian_BE} and then apply QSP/QSVT. Nevertheless, the prohibitively large value of the norm ($\norm{\Delta_i}\sim 4/\delta_x^2$) precludes us from using these methods, as their cost increases drastically. We can nevertheless leverage our knowledge about the shape of $\exp(T)$.

Notice that discretised Laplacian with periodic boundary conditions, $\Delta_p$ in Eq.(\ref{eq:Lap-mat}), is a circulant matrix, and is thus diagonalised by the QFT. For an operator acting on $n$ qubits, the eigenvalues are
\begin{align}\label{eq:heat-ev}
    \lambda_k = \delta_x^{-2}(-2+2\cos(2\pi k/2^n)) = -\delta_x^{-2}4\sin^2(\pi k/2^n), \q \text{with}\q k=0,1,\dots,2^n-1.
\end{align}

\subsection{Filtering high frequencies}


For large enough $t$, the exponential of the eigenvalues in Eq.(\ref{eq:heat-ev}) decays as $\exp\left( -4t\sin^2(\pi k/2^n)/\delta_x^2\right)$, so we can filter any large eigenvalues with a controlled error. In particular, we keep only frequencies close to $k=0$ or $k=2^n$, where the eigenvalues have their minumum absolute value. Considering first periodic boundary conditions, we can write the Laplacian as $\Delta_p = \frac{1}{\delta_x^2}\left(C_n^\top -2\I +C_n \right)$. We can write the truncated Laplacian using Eq.(\ref{eq:Cn-diag}), such that 
\begin{align}\label{eq:diag-cutoff}
\widetilde{\Delta_p}^{(k_c)} =&\qftn{n}\left(\sum_{\substack{0\leq k \leq k_c \\ 2^n-k_c \leq k <  2^n}} -\frac{4}{\delta_x^2}\sin^2(\pi k/2^n)\ket{k}\bra{k} \right)\qftn{n}^\dagger,
\end{align}
where $k_c$ is the cutoff. For a given normalized initial state $\boldsymbol{u}(0)$, the cutoff error is 
\begin{align}\label{eq:cutoff-err}
\norm{e^{\Delta_p t}\boldsymbol{u}(0)-e^{\widetilde{\Delta_p}^{(k_c)} t}\boldsymbol{u}(0)} \leq&\norm{\qftn{n}\left( \sum_{k=0}^{2^n-1}e^{\lambda_k t}\ket{k}\bra{k}-\sum_{\substack{0\leq k \leq k_c \\ 2^n-k_c \leq k <  2^n}}e^{\lambda_k t}\ket{k}\bra{k}
\right)\qftn{n}^\dagger}\norm{\boldsymbol{u}(0) }
\leq \n
\leq&\max_{k_c<k<2^n-k_c}\abs{e^{\lambda_k t}} = \exp\left( -\frac{4\sin^2(\pi (k_c+1)/2^n)}{\delta_x^2}t\right) \approx \exp\left(-\frac{4k_c^2\pi^2}{L^2}t\right),
\end{align}
where we have used that $\delta_x = L/2^n$. Notice that this approximation holds near $k=0$ and $k = 2^n$, since $\sin(x) = \sin(\pi-x)$, and thus $\sin(\pi(2^n-k)/2^n) = \sin(\pi - \pi k /2^n) = \sin(\pi k /2^n)$.

Imposing that the error of Eq.(\ref{eq:cutoff-err}) remains below a threshold $\epsilon$, we obtain a cutoff
\begin{align}\label{eq:kc}
k_c = \mathcal{O}\left( L\sqrt{\frac{1}{t}\log\left( \frac{1}{\epsilon}\right)}\right).
\end{align}

Assuming that $k_c = 2^m-1$ for some integer $m$, we can define the projector over the non-zeroed states as $\Pi_m^{(2)} \equiv \Pi_m \oplus 0_{n-2m}\oplus \Pi_m$, where $\Pi_m = \sum_{k=0}^{2^m-1}\ket{k}\bra{k}$.
We can rewrite Eq.(\ref{eq:diag-cutoff}) using Eq.(\ref{eq:gamma-eig}) and defining
\begin{align}
\Gamma_m^{(2)} \equiv \Gamma_m \oplus 0_{n-2m}\oplus w^{-2^m}\Gamma_m
\end{align}
as
\begin{align}
    \widetilde{\Delta_p}^{(k_c)} = \qftn{n}\left[ \frac{1}{\delta_x^2}\left(\Gamma_m^{(2)} \right)^{-1}-\frac{2}{\delta_x^2}\Pi_m^{(2)}+\frac{1}{\delta_x^2}\Gamma_m^{(2)}\right]\qftn{n}^\dagger.
\end{align}

The heat propagator with periodic boundary conditions is thus obtained as
\begin{align}\label{eq:block-diag}
\exp\left( \widetilde{\Delta_p}^{(k_c)}t \right)=\qftn{n}\left( \exp\left[\frac{1}{\delta_x^2}\Gamma_m^{-1}t-\frac{2}{\delta_x^2}\I_mt+ \frac{1}{\delta_x^2}\Gamma_mt\right]\oplus \I_{n-(m+1)}\oplus \right.\n
\left. \oplus \exp\left[\frac{w^{2^m}}{\delta_x^2}\Gamma_m^{-1}t-\frac{2}{\delta_x^2}\I_mt+ \frac{w^{-2^m}}{\delta_x^2}\Gamma_mt\right] \right)\qftn{n}^\dagger,
\end{align}
where we have used that $\exp[0(\I_{n-(m+1)})] = \I_{n-(m+1)}$.

In Eq.(\ref{eq:block-diag}) we have a block diagonal matrix composed of three diagonals; the two exponential of the $\Gamma_m$, corresponding to the frequencies that we are keeping, and the identity corresponding to the zeroed frequencies. We name 
\begin{align}
U_{lo} &= \exp\left[X_{lo}t\right] \equiv \exp\left[\frac{1}{\delta_x^2}\Gamma_m^{-1}t-\frac{2}{\delta_x^2}\I_mt+ \frac{1}{\delta_x^2}\Gamma_mt\right],\n U_{hi} &= \exp\left[ X_{hi}t\right] \equiv \exp\left[\frac{w^{2^m}}{\delta_x^2}\Gamma_m^{-1}t-\frac{2}{\delta_x^2}\I_mt+ \frac{w^{-2^m}}{\delta_x^2}\Gamma_mt\right],
\end{align}
such that the block diagonal matrix is $\exp\left( \widetilde{\Delta_p}^{(k_c)}\right) = \qftn{n}\left(U_{lo}\oplus \I_{n-(m+1)}\oplus U_{hi}\right)\qftn{n}^\dagger$. Notice that
\begin{align}
\norm{X_{lo}} = \norm{X_{hi}} = \max_k \abs{ \lambda_k} = \frac{4 k_c^2 \pi^2}{L^2} = \frac{4 \pi^2\log(1/\epsilon)}{t},
\end{align}
such that there is no $\delta_x^{-1}$ and thus no $2^n$ dependence.

Because $U_{lo}$ and $U_{hi}$ are non-unitary evolutions, we cannot apply directly the results from Section \ref{sec approximated be}. We can use Linear Combination of Hamiltonian Simulation (LCHS) \cite{Optimal-LCHS} in order to capture this non-unitary dynamics.Given a matrix $A$, LCHS allows us to construct $\exp(-At)$ with $\mathcal{O}(\norm{A} t\log(1/\epsilon))$ calls to a Hamiltonian simulation oracle, given that $A+A^\dagger \succeq 0$. In this case, because we want to construct $\exp(X_{lo/hi}t)$, we need that $X_{lo/hi}+X_{lo/hi}^\dagger \preceq 0$, which they already satisfy. The general transformation considers $A = L+iH$, where $L = \hermpart(A)$ and $H = -i\skewhpart(A)$. In this case, $X_{lo/hi} = \hermpart(X_{lo/hi})$, and there is no skew-Hermitian component. The transformation is
\begin{align}
U_{lo/hi}= e^{X_{lo/hi}t} = \frac{1}{\sqrt{2\pi}} \int_{\mathbb{R}} \hat{f}(k) e^{-i k X_{lo/hi}t}dk \approx \sum_{j=1}^M w_j \left(e^{i\Delta_k X_{lo/hi}t} \right)^{j},
\end{align}
where $\hat{f}(k)$ is the kernel function , $w_j = \hat{f}(k_j)\Delta_k/\sqrt{2\pi}$ and $k_j = j\Delta_k$. Each of the unitaries $\exp[i \Delta_k X_{lo/hi}t]$ can in turn be implemented using Section \ref{sec approximated be} since $i\Delta_k X_{lo/hi}$ is skew-Hermitian. The identity can be taken out of the exponential as a constant phase $e^{-2i\Delta_k t/\delta_x^2}$, and we are left with a skew-hermitian matrix, similar to the one in Eq.(\ref{eq:exp-diag-op}).

Thus, $U_{lo/hi}$ can be implemented with $\mathcal{O}(\log^2(1/\epsilon))$ calls to the implementation of each of the unitaries using Section \ref{sec approximated be}. 

We can construct the block diagonal matrix using controlled applications of each of the operators $U_{lo/hi}$. Notice that, from LCHS, this is embedded in a larger system, using $\mathcal{O}(\log M)$ ancilla qubits. Naming the projectors $\Pi_0 = \ket{0}\bra{0}^{\otimes n-(m+1)}$ and $\Pi_1 = \ket{1}\bra{1}^{\otimes n-(m+1)}$ we can construct the controlled operators
\begin{align}
C_0 \equiv \Pi_0\otimes U_{lo} + \left( \I_n-\Pi_0\right),\q  C_1 \equiv \Pi_1\otimes U_{hi} + \left( \I_n-\Pi_1\right),
\end{align}
such that 
\begin{align}
C_0C_1 = \Pi_0\otimes U_{lo} + \Pi_1 \otimes U_{hi} + \left(\I_n-\Pi_0-\Pi_1\right) = U_{lo} \oplus I_{n-(m+1)}\oplus U_{hi},
\end{align}
and thus 
\begin{align}
\exp\left( \widetilde{\Delta_p}^{(k_c)}t\right) = \qftn{n}C_0C_1\qftn{n}^\dagger.
\end{align}

{\it Non-periodic boundary conditions:} the non-periodic Laplacians are not diagonalized via the QFT, but via the Discrete Cosine Transformation (DCT) and the Discrete Sine Transformation (DST), for Neumann and Dirichlet boundary conditions, respectively. Because these are not unitary operations, they have to be embedded in a $n+1$ qubit system, and the Laplacians are recovered as a block encoding.

We consider Neumann boundary conditions, whose Laplacian is diagonalized via DCT-II \cite{DCT}. The Dirichlet case of Eq.(\ref{eq:Lap-mat}) can equivalently be done using DST-I. We can use
\begin{align}\label{eq:DCT-DST-diag}
    V_{n+1}^\dagger \Delta_p^{(n+1)} V_{n+1} =  V_{n+1}^\dagger \qftn{n+1} \sum_{k=0}^{2^{n+1}} \lambda_k \ket{k}\bra{k}\qftn{n+1}^\dagger V_{n+1} = \Delta_N \oplus \Delta_D,
\end{align}
where $\Delta_p^{(n+1)}$ is the Laplacian with periodic boundary conditions acting on $n+1$ qubits, $\lambda_k$ are the eigenvalues of Eq.(\ref{eq:heat-ev}) for an operator acting on $n+1$ qubits, and
\begin{align}
        V_{n+1} =&\frac{1}{\sqrt{2}}
    \begin{pmatrix}
         1 && 1       &&   \\
            & \ddots  && \ddots &   \\
           && 1     &&    1 \\
           && 1 && -1 \\
           &  \iddots &&  \iddots & \\
           1 && -1 &&
    \end{pmatrix}.
\end{align}
 Proof in Appendix \ref{app DCT DST}. $V_{n+1}$ can be implemented via a reflection and two Haddamard gates, and thus the cost is dominated by the cost of the implementation of QFT, which is $\mathcal{O}(n^2)$ \cite{DCT-quantum,Fast-qrt}. 

Because the matrix in Eq.(\ref{eq:DCT-DST-diag}) is block diagonal, its exponential is the exponential of each of the blocks, and thus we can obtain for Neumann boundary conditions (and similarly for Dirichlet)
\begin{align}
e^{\Delta_N t} = \left(\bra{0}\otimes \I_n \right)V_{n+1}^\dagger \qftn{n+1} \sum_{k=0}^{2^{n+1}} e^{\lambda_k t} \ket{k}\bra{k}\qftn{n+1}^\dagger V_{n+1} \left(\ket{0}\otimes \I_n\right),
\end{align}
and thus we can apply the same frequency cutoff as in Eq.(\ref{eq:diag-cutoff}).

\section{Discussion and conclusions}\label{sec dicussion}

In this work we have developed a quantum framework for a straightforward implementation of a block encoding of exponentials of banded Toeplitz matrices, with direct applications to quantum resolution of Partial Differential Equations (PDEs). The key insight is that, although banded Toeplitz matrices are generally non-diagonalisable, they can be obtained as the block encoding of a polynomial of the circulant generator $C_n$ (see Eq.(\ref{eq:Cn-be-T})), which is unitary and diagonalisable. More generally, we showed that a general Toeplitz can be decomposed in terms of $C_n$ and the skew-circulant generator $N_n$, both of which are diagonalised by the Quantum Fourier Transform (QFT), Eqs.(\ref{eq:Cn-diag}) and (\ref{eq:Nn-diag}). 

We showed that this QFT-diagonalisation of both $C_n$ and $N_n$ can be represented in terms of the same diagonal matrix containing their eigenvalues, $\Gamma_n$. We derived an efficient approximation for this diagonal operator based on truncated Pauli-Z string decomposition with a closed form Frobenius-norm error bound, Theorem \ref{thm:overlap-zi-gamma-pows}. This decomposition depends only on the truncation depth $\delta=n-d$ and not directly on the system size $n$. Since the resulting Pauli strings commute, an approximation to the exponentials of $C_n$ and $N_n$ can be efficiently implemented through the QFT diagonalisation followed by single-qubit rotations,  Eq.(\ref{eq:exp-diag-op}).

Building on this construction, we showed that by applying the Lie-Trotter product formula to the circulant/skew-circulant decomposition of the banded Toeplitz $T$ we can obtain an approximation to $e^{T t}$ with an error bounded by $\mathcal{O}((t^2/c))\abs{\alpha}_1^2)+\mathcal{O}\left( \frac{\pi t m}{2^\delta}\norm{\alpha}_1\right)$, where $c$ is the number of Trotter steps, $\norm{\alpha}_1^2 = \left(\sum_k \abs{\alpha_k}\right)^2$, determined by the coefficients $\{\alpha_k\}$ of the Toeplitz matrix (Theorem \ref{thm:qft-trotter}), and $m$ is the bandwidth of the matrix $T$.

As a concrete application, we have considered the one-dimensional heat equation, whose solution is the exponentiation of the discretised Laplacian $\Delta$. Since $\Delta$ is circulant under periodic boundary conditions, it is diagonalised by the QFT. Similarly, considering Dirichlet and Neumann boundary conditions, the operator is diagonalized via the discrete sine and cosine transform, respectively, with a similar cost to the QFT. We showed that, since the eigenvalues decay rapidly away from $k = 0$, the propagator can be truncated to a narrow low-frequency band with an error controlled by the frequency cutoff $k_c = \mathcal{O}(L\sqrt{(1/t)\log(1/\epsilon)}$. Because the truncated low-frequency block generates non-unitary evolutions, we can implement them using Linear Combination of Hamiltonian Simulation (LCHS), decomposing each block into a linear combination of skew-Hermitian unitaries implemented through the truncated Pauli-string construction of Section \ref{sec approximated be}. Thus, we eliminate the normalization bottleneck incurred when applying QSP/QSVT protocols direcly to a block encoding of $\Delta$, whose norm scales as $\mathcal{O}(1/\delta_x^2)$, thus increasing exponentially with the number of qubits.

Prior quantum algorithms for Toeplitz linear systems \cite{Asymptotic,BELS} and structured-matrix block encodings \cite{Explicit,StructuredData} address the problem of block encoding Toeplitz operators, but do not directly construct operators with the form $e^{T t}$, as the ones arising from PDEs. Recent explicit encodings of Laplacian operators \cite{BELap,Laplacian_BE} provide efficient circuits for $\Delta$ itself, but at the cost of a high subnormalisation which prevents applying QSP or QSVT related methods. 

Our results avoid this bottleneck by establishing a general methodology for exploiting the algebraic structure of banded Toeplitz operators. By directly addressing the implementation of the exponentiation of those matrices, we can overcome the prohibitive cost of the subnormalization factors, achieving a $\mathcal{O}(1)$ subnormalization. Compared to general PDE solvers \cite{ChildsPDE,ArrazolaPDE,CostaPDE}, the method trades generality for a better scaling in the regime where the bandwidth of the Toeplitz operator is small compared to the system size, $m \ll 2^n$, and where the norm of the operator is large. These conditions generally arise in the discretisation of differential operators.

Several limitations and open questions remain, which we state explicitly. First, the Trotter bound of Theorem~\ref{thm:qft-trotter} is stated up to constants that, for non-contractive coefficient sets, include a factor exponential in $t\norm{\alpha}_1$; a tight statement for general banded Toeplitz operators is open. Second, the selection of the dominant Pauli-string subset $\Omega$ in Sec.~\ref{sec approximated be} is currently heuristic, based on the observed $n$-independence of the dominant pattern; a rigorous a-priori bound on $|\Omega|$ versus the target error would strengthen the result.


\section*{References}


\appendix

\section{Technical lemmas and proofs}
\label{section:proofs}
\begin{proof}[Proof of Lemma \ref{lemma:circ-eigvals-as-product}]
    The approach is similar to the construction of the quantum circuit for the QFT in \cite{Cleve_1998}.
    Each $k \in \mathbb{Z}_{2^n}$ determines 
    the binary encoding $k=\sum_{\ell=1}^n 2^{n-\ell}k_{\ell}$ with $k_{\ell} \in \{0, 1\}$, then from \eqref{eq:gamma-eig} we obtain
    \begin{subequations}
    \begin{align}
         \Gamma_n
         =& \sum_{k_1=0}^1 \sum_{k_2=0}^1 \cdots \sum_{k_n=0}^1
         \exp\left(\frac{2\pi \imath}{2^n}
         \sum_{\ell=1}^n \left(2^{n-\ell}k_{\ell}\right)
         \right)
         \ket{k_1}\bra{k_1} \otimes \cdots \otimes \ket{k_n}\bra{k_n}\\
         =& \sum_{k_1=0}^1 \sum_{k_2=0}^1 \cdots \sum_{k_n=0}^1
         \exp\left(\imath \pi k_{1}
         \right)
         \ket{k_1}\bra{k_1} \otimes \cdots \otimes
         \exp\left(\imath \pi 2^{1-n}k_{n}\right)
         \ket{k_n}\bra{k_n}\\
         =&
         \left(
         \ket{0}\bra{0} - \ket{1}\bra{1}
         \right) \otimes
         \sum_{k_2=0}^1 \cdots \sum_{k_n=0}^1
         \exp\left(\imath \pi 2^{-1}k_{2}\right)
         \ket{k_2}\bra{k_2} \otimes \cdots \otimes
         \exp\left(\imath \pi 2^{1-n}k_{n}\right)
         \ket{k_n}\bra{k_n}\\
         =&
         \pauliz{}
         \otimes
         \left(
         \ket{0}\bra{0} + e^{\imath \pi 2^{-1}}\ket{1}\bra{1}
         \right) \otimes
         \cdots
         \otimes
         \left(
         \ket{0}\bra{0} + e^{\imath \pi 2^{1-n}}\ket{1}\bra{1}
         \right),
    \end{align}
    \end{subequations}
    which corresponds to the claim in \eqref{eq:gamma-phase-gates}.
    
    This result can also be deduced from the fact that the first row of circulant matrices is the \underline{inverse} \emph{discrete Fourier transform} (DFT) of its eigenvalues \cite{ToeplitzReview}.
    So, the diagonal entries of $\Gamma_n$ correspond (up to a scalar factor) to
    \begin{align}
        \qftn{n}\left(\ket{0}^{\otimes(n-1)}\ket{1}\right) =& \frac{1}{\sqrt{2^n}}\bigotimes_{\ell=1}^n \left(\ket{0}+e^{2\pi \imath 2^{-\ell}}\ket{1}\right).
    \end{align}
\end{proof}
\section{Proof of Theorem \ref{thm:approx-circeigv-distance}}\label{app:approx-circeigv}

\begin{lemma}
    \label{lemma:fro-dist-uv-with-global-phase}
    Let $U, V$ be unitary operators on $n$ qubits, then
    \begin{align}
        \min_{\phi \in \mathbb{R}} \frac{\left\| U-e^{\imath \phi}V \right\|_F}{\sqrt{2^{n+1}}} =&
        \sqrt{1-\frac{|\traceop\left(U^{\dagger}V\right)|}{2^n}}.
    \end{align}
\end{lemma}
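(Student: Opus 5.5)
The plan is to expand the squared Frobenius norm through the Hilbert--Schmidt inner product and then optimise over the global phase directly. For fixed $\phi\in\mathbb{R}$, write
\begin{align}
    \left\| U-e^{\imath \phi}V \right\|_F^2 = \traceop\left( (U-e^{\imath\phi}V)^\dagger (U-e^{\imath\phi}V)\right) = \traceop(U^\dagger U) + \traceop(V^\dagger V) - 2\repart\left( e^{\imath\phi}\traceop(U^\dagger V)\right).
\end{align}
Unitarity of $U$ and $V$ on $n$ qubits gives $\traceop(U^\dagger U)=\traceop(V^\dagger V)=2^n$. The first two terms therefore contribute $2^{n+1}$, and all of the $\phi$-dependence sits in the cross term.

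Next, write $\traceop(U^\dagger V)=|\traceop(U^\dagger V)|e^{\imath\theta}$. The cross term becomes $-2|\traceop(U^\dagger V)|\cos(\phi+\theta)$, which is minimised at $\phi=-\theta$. If the trace vanishes, $\phi$ is arbitrary. The minimum is attained, and its value is $2^{n+1}-2|\traceop(U^\dagger V)|$.

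Finally, divide by $2^{n+1}$ and take the (monotone) square root. This gives $\sqrt{1-|\traceop(U^\dagger V)|/2^n}$. The radicand is nonnegative because $|\traceop(U^\dagger V)|\le \|U\|_F\|V\|_F = 2^n$ by Cauchy--Schwarz. Taking the square root commutes with the minimisation since $\sqrt{\cdot}$ is increasing, so minimising the norm and minimising its square give the same $\phi$.

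There is no real obstacle here. The only point needing care is the sign and conjugation convention in the cross term: one must check that $\traceop(V^\dagger U)=\overline{\traceop(U^\dagger V)}$. Then the two cross contributions combine into $-2\repart(e^{\imath\phi}\traceop(U^\dagger V))$, and choosing $\phi$ aligns this phase so the real part equals the modulus.
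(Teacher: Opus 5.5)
Your proposal is correct and follows essentially the same route as the paper: expand the squared Frobenius norm to $2^{n+1}-2\repart\left(e^{\imath\phi}\traceop(U^\dagger V)\right)$ and choose $\phi$ to cancel the phase of $\traceop(U^\dagger V)$. The paper first factors out $U$ by unitary invariance, which is a cosmetic difference. You also cover the vanishing-trace case and the nonnegativity of the radicand, which the paper leaves implicit.
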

\begin{proof}
    We expand the cost function, so considering the unitary invariance of the Frobenius norm we obtain
    \begin{subequations}
    \begin{align}
        \frac{\left\| U-e^{\imath \phi}V \right\|_F}{\sqrt{2^{n+1}}} =&
        \frac{\left\| \idenmnodim-e^{\imath \phi}U^{\dagger}V \right\|_F}{\sqrt{2^{n+1}}} = \sqrt{1-\frac{1}{2^n}\repart\left(e^{\imath \phi}\traceop\left(U^{\dagger}V\right)\right)}.
    \end{align}
    \end{subequations}
    Now, let $\alpha e^{\imath \beta}=\traceop\left(U^{\dagger}V\right)$, then assuming $\alpha>0$ (i.e. $\traceop\left(U^{\dagger}V\right) \ne 0$), the expression
    $\repart\left(e^{\imath \phi} \traceop\left(U^{\dagger}V\right) \right)=\alpha \repart\left(e^{\imath (\phi + \beta)}\right)=\alpha \cos(\phi+\beta)$ is maximized when $\cos(\phi+\beta)=1$.
    Set $\phi=-\beta$, then $\repart\left(e^{\imath \phi} \traceop\left(U^{\dagger}V\right) \right)=\alpha$ with $\alpha=|\traceop\left(U^{\dagger}V\right)|$, therefore
    \begin{align}
       \min_{\phi \in \mathbb{R}} \sqrt{1-\frac{1}{2^n}\repart\left(e^{\imath \phi}\traceop\left(U^{\dagger}V\right)\right)}
       =&
       \sqrt{1-\frac{|\traceop\left(U^{\dagger}V\right)|}{2^n}},
    \end{align}
    which confirms the claim.
\end{proof}

\begin{proof}[Proof of Theorem \ref{thm:approx-circeigv-distance}]
    We first consider the case $k\ge 1$ and $k$ odd.
    By the half-angle formulae, for all $\theta \in \mathbb{R}$ s.t. $\cos(\theta/2)\ge 0$, we have that $|1+e^{\imath \theta}|=2\cos(\theta/2)$.
    In the upcoming steps we need that $\cos(\theta/2)\ge 0$ with $\theta=2\pi k/2^{n-d+\ell}$ for all $\ell \in [1,...,d]]$,
    that is $\pi k/2^{n-d+\ell}\le \pi/2$ so $k \le 2^{n-d}$ ($k$ positive).
    
    In addition, we consider the generalization of \emph{Morrie's law} \cite{morrie-law, math-bite}, which reads
    \begin{align}
        \label{eq:morries-law}
        \prod_{j=0}^{n-1}\cos(2^j \alpha) =& \frac{\sin(2^n \alpha)}{2^n \sin(\alpha)},
    \end{align}
    for all $\alpha \in \mathbb{R}$ and positive integers $n$.
    As a result of \ref{lemma:fro-dist-uv-with-global-phase}, for the case $1 \le d \le n-1$, we obtain
    \begin{subequations}
    \begin{align}
        \label{eq:approx-circeigv-distance-detail-i}
        \min_{\phi \in \mathbb{R}} \frac{1}{\sqrt{2^{n+1}}}\left\|\Gamma_{d,n}^k - e^{\imath \phi} \circeigv{n}^k \right\|_F =&
        \sqrt{1-\frac{
        \left|\traceop\left(
            \left(\Gamma_{d,n}^k\right)^{\dagger}\circeigv{n}^k
        \right)\right|
        }{2^n}}\\
        \underset{\eqref{eq:gamma-phase-gates}, \eqref{eq:gamma-approx}}{=}& 
        \sqrt{1-\frac{1}{2^d}\prod_{\ell=n-d+1}^n\left|1+\exp\left(\frac{2\pi \imath k}{2^{\ell}}\right)\right|}\\
        =&
        \sqrt{1-\prod_{\ell=1}^d\left|\frac{
            1+e^{{2\pi \imath k}/{2^{n-d+\ell}}}
        }{2}\right|}
        =
        \sqrt{1-\prod_{\ell=1}^d \cos\left(
        \frac{2\pi k}{2^{(n+1)+(\ell-d)}}
        \right)
        }\\
        =& 
        \sqrt{1-\prod_{\ell=0}^{d-1} \cos\left(
        \alpha 2^{\ell}
        \right)
        }
        \underset{\eqref{eq:morries-law}}{=}
        \sqrt{1-\frac{\sin(2^d \alpha)}{2^d \sin(\alpha)}},
    \end{align}
    \end{subequations}
    where $\alpha=\frac{2\pi k}{2^{n+1}}$.
    The second to the last equality holds since the following sets are equivalent
    $\left\{\frac{1}{2^{\ell-d}}\middle| \ell=1, 2, \ldots, d\right\}=\left\{2^{\ell}\middle| \ell=0, 2, \ldots, d-1 \right\}$.
    This confirms the claim for the case $1 \le d \le n-1$.
    However, when the approximation degree $d$ vanishes, so does the Frobenius norm above, therefore the latter result extends to the case $d=0$ as well.
    In addition, the approximation $\eta(\delta, k)$ follows at once from the small-angle approximation for $\sin(\alpha)$ when $n$ is large enough and $|k|\ll 2^n$.

    For the case $k<0$, the result follows directly by considering $k \leftarrow |k|$, indeed in \eqref{eq:approx-circeigv-distance-detail-i}
    we have that
    \begin{align}
        \left|\traceop\left(
            \left(\Gamma_{d,n}^k\right)^{\dagger}\circeigv{n}^k
        \right)\right|
        =&
        \left|
        \overline{
        \traceop\left(
            \left(\Gamma_{d,n}^k\right)^{\dagger}\circeigv{n}^k
        \right)}
        \right|
        =
        \left|\traceop\left(
            \left(\Gamma_{d,n}^{-k}\right)^{\dagger}\circeigv{n}^{-k}
        \right)\right|,
    \end{align}
    since $\overline{\circeigv{n}^k}=\circeigv{n}^{-k}$, for all (odd) integer $k$.
\end{proof}
\begin{corollary}
    Consider the conditions of Theorem \ref{thm:approx-circeigv-distance}, then
    \begin{align}
        \log(\eta(\delta, k)) \sim& -\delta \log\left(\frac{2}{\beta^{1/\delta}}\right),
    \end{align}
    with $\beta=\pi |k|$, as $\delta \to \infty$.
\end{corollary}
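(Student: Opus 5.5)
The plan is to expand $\eta(\delta,k)$ directly from its closed form in Eq.~(\ref{eq:approx-fun-eta}), with $k$ (hence $\beta=\pi|k|$) held fixed while $\delta\to\infty$. We read ``$\sim$'' as asymptotic equivalence, i.e.\ the ratio of the two sides tends to $1$. Set $x=x_\delta=\beta/2^{\delta}$, so that
\begin{align}
\eta(\delta,k)^2 = 1-\frac{\sin x}{x}.
\end{align}
Since $\beta$ is fixed, $x\to 0^+$ as $\delta\to\infty$. For $\delta$ large enough we have $0<x<\pi$, so $\sin x/x<1$, $\eta>0$, and $\log\eta$ is well defined.

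First, expand $\sin x/x$ in its Taylor series:
\begin{align}
1-\frac{\sin x}{x}=\frac{x^2}{6}-\frac{x^4}{120}+\dots=\frac{x^2}{6}\bigl(1+O(x^2)\bigr).
\end{align}
This gives $\eta=\frac{x}{\sqrt6}\bigl(1+O(x^2)\bigr)$. Taking logarithms,
\begin{align}
\log\eta(\delta,k)=\log\beta-\delta\log 2-\tfrac12\log 6+O(4^{-\delta}).
\end{align}

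Second, rewrite the claimed right-hand side:
\begin{align}
-\delta\log\left(\frac{2}{\beta^{1/\delta}}\right)=-\delta\log2+\log\beta .
\end{align}
Subtracting, the two quantities differ by $-\tfrac12\log6+o(1)$, which is bounded. Both quantities are $-\delta\log2+O(1)$, so their ratio is $1+O(1/\delta)\to1$. This proves the asymptotic equivalence.

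The only delicate point is interpretive, not computational. The two sides are not equal up to $o(1)$: they differ by the constant $\tfrac12\log 6$. So the statement must be read as ratio-equivalence, or equivalently as equality up to an additive $O(1)$. I would state this explicitly and note that it matches the intended message: $\log\eta$ is eventually linear in $\delta$ with slope $-\log 2$. Finally, I would remark that the factor $\beta^{1/\delta}\to1$ in the statement only captures the $\log\beta$ offset and does not change the leading-order slope.
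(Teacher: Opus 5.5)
Your proof is correct and follows essentially the same route as the paper: the paper also substitutes $x=\beta/2^{\delta}$ and shows $\log(1-\sin x/x)/(2\log x)\to 1$ as $x\to 0$, using L'H\^opital where you use a Taylor expansion. Your expansion $1-\sin x/x=\frac{x^2}{6}(1+O(x^2))$ is slightly sharper, because it identifies the bounded additive discrepancy $-\tfrac12\log 6+O(4^{-\delta})$ between the two sides and so makes explicit that the statement holds only as ratio-equivalence.
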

\begin{proof}
    Let $x=\beta/2^{\delta}$, then
    \begin{align}
        \lim_{\delta \to \infty} -
        \frac{\log(\eta(\delta, k))}{\left(\delta \log\left(\frac{2}{\beta^{1/\delta}}\right)\right)} =& \lim_{x\to 0} \frac{\log\left(1-\frac{\sin(x)}{x}\right)}{2\log(x)} 
        = \frac{1}{2} \lim_{x\to 0} \frac{\sin x + x\cos x}{\sin x} =1.
    \end{align}
\end{proof}

\section{Operator-norm error of the truncated eigenphases}
Theorem \ref{thm:approx-circeigv-distance} measures the truncation error in the normalised Frobenius norm, i.e.\ on average over the eigenphases.
To propagate the truncation through the Trotter bound of Theorem \ref{thm:qft-trotter} we need instead the operator norm $\norm{\cdot}$, which controls the worst eigenphase.
\begin{lemma}
    \label{lemma:approx-circeigv-opnorm}
    Let $n \ge 1$, $0 \le d \le n-1$, $\delta=n-d$, and let $k$ be any integer with $1 \le |k| \le 2^{\delta}$.
    Then, with $\alpha=2\pi|k|/2^{n+1}$ as in Theorem \ref{thm:approx-circeigv-distance},
    \begin{align}
        \label{eq:approx-circeigv-opnorm}
        \min_{\phi \in \mathbb{R}}\norm{\widetilde{\Gamma}_{d,n}^{k} - e^{\imath \phi}\circeigv{n}^{k}} \le
        \frac{\pi \abs{k}}{2^\delta}.
    \end{align}
\end{lemma}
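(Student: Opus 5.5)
Both $\widetilde{\Gamma}_{d,n}^{k}$ and $\circeigv{n}^{k}$ are diagonal in the computational basis, so I plan to reduce the operator norm to a maximum over diagonal entries and then bound each entry by an elementary phase estimate. No appeal to Morrie's law or to the Frobenius computation of Theorem \ref{thm:approx-circeigv-distance} should be needed. The argument also never uses the parity of $k$, which matches the statement allowing any integer $1\le |k|\le 2^{\delta}$.

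\textbf{Step 1: write the entries.} I index basis states as $j=2^{d}a+b$, with $0\le a<2^{\delta}$ the value of the $\delta=n-d$ most significant qubits and $0\le b<2^{d}$ the value of the $d$ least significant ones. This is the ordering used in the proof of Lemma \ref{lemma:circ-eigvals-as-product}, where qubit $\ell$ carries weight $2^{n-\ell}$. From Eq.(\ref{eq:gamma-eig}) and Eq.(\ref{eq:gamma-approx}),
\begin{align}
\bra{j}\circeigv{n}^{k}\ket{j}=\omega^{k(2^d a+b)},\qquad \bra{j}\widetilde{\Gamma}_{d,n}^{k}\ket{j}=\omega^{k2^d a}.
\end{align}
Since the matrices are diagonal and unitary, for every $\phi$
\begin{align}
\norm{\widetilde{\Gamma}_{d,n}^{k}-e^{\imath\phi}\circeigv{n}^{k}}=\max_{0\le b<2^d}\abs{1-e^{\imath(\phi+2\pi k b/2^n)}}.
\end{align}
The dependence on $a$ drops out entirely.

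\textbf{Step 2: choose the phase and bound.} The angles $\theta_b=2\pi k b/2^n$, for $b=0,\dots,2^d-1$, lie in an interval of length $2\pi|k|(2^d-1)/2^n$. I take $\phi$ to be minus the midpoint of this interval, namely $\phi=-\pi k(2^d-1)/2^n$. Then every shifted angle satisfies
\begin{align}
\abs{\phi+\theta_b}\le \pi|k|(2^d-1)/2^n<\pi|k|/2^{\delta}.
\end{align}
Using $\abs{1-e^{\imath x}}=2\abs{\sin(x/2)}\le\abs{x}$, which holds for all real $x$, each diagonal entry is bounded by $\pi|k|/2^\delta$. The minimum over $\phi$ is at most the value at this particular $\phi$, which gives Eq.(\ref{eq:approx-circeigv-opnorm}). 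The case $d=0$ is trivial, since both sides vanish at $\phi=0$.

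\textbf{Main obstacle.} I do not expect a real obstacle. The one point needing care is getting the qubit-ordering convention right, so that the truncated qubits are indeed the low-order bits $b$ and not the high-order bits. This should be checked against the proof of Lemma \ref{lemma:circ-eigvals-as-product}: there qubit $\ell=1$ carries the phase $\pi$, i.e. it is the most significant bit, which confirms the decomposition in Step 1.
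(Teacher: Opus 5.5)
Your proof is correct and follows essentially the same route as the paper: the same splitting $j=2^d a+b$, the same reduction of the operator norm to $\max_b\abs{1-e^{\imath(\phi+2\pi k b/2^n)}}$, and the same midpoint choice of $\phi$. Your use of $\abs{1-e^{\imath x}}=2\abs{\sin(x/2)}\le\abs{x}$ is cleaner than the paper's monotonicity-of-$\abs{\sin}$ step and shows the hypothesis $\abs{k}\le 2^{\delta}$ is not actually needed; one small slip is that for $d=0$ only the left-hand side vanishes, not both sides (your general argument covers that case anyway).
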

\begin{proof}
    Write the computational basis index as $j=2^d q + r$ with $0\le q<2^{n-d}$ and $0 \le r < 2^d$.
    From Eqs.~\eqref{eq:gamma-eig} and \eqref{eq:gamma-approx}, we have that
    \begin{align}
        \begin{split}
        \circeigv{n}^k\ket{j} =& \omega^{kj}\ket{j},\\
        \widetilde{\Gamma}_{d,n}^k\ket{j} =& \omega^{k2^dq}\ket{j}=\omega^{k(j-r)}\ket{j},
        \end{split}
    \end{align}
    hence
    \begin{align}
        \widetilde{\Gamma}_{d,n}^{-k}\circeigv{n}^{k} =& \idenm{2}^{\otimes (n-d)} \otimes \sum_{r=0}^{2^d-1}\omega^{kr}\ket{r}\bra{r}.
    \end{align}
    By the unitary invariance of the operator norm,
    \begin{align}
        \norm{\widetilde{\Gamma}_{d,n}^{k} - e^{\imath \phi}\circeigv{n}^{k}} =&
        \norm{\idenmnodim - e^{\imath\phi}\widetilde{\Gamma}_{d,n}^{-k}\circeigv{n}^{k}}\\
        =& \max_{0 \le r < 2^d}\abs{1-e^{\imath(\phi+2\pi k r/2^n)}}\\
        =& 2\max_{0\leq r \leq 2^d}\abs{\sin(\phi + 2\pi k r/2^n)}.
    \end{align}
We can see that for $r \in [0,2^d]$ $2\pi k/2^n \leq \pi/2$, where $\abs{\sin(\cdot)}$ increases monotonically. Thus,
\begin{align}
    \max_{0\leq r \leq 2^d}\abs{\sin(\phi + 2\pi k r/2^n)} = \max\{\abs{\sin(\phi)}, \abs{\sin(\phi + (2^d-1)2\pi k/2^n)}\}.
\end{align}
This quantity is minimized for $\phi$ being equidistant from both endpoints, $\phi^\star = -(2^d-1)\pi k/2^{n+1}$. We can thus write
\begin{align}
     \min_{\phi \in \mathbb{R}}\norm{\widetilde{\Gamma}_{d,n}^{k} - e^{\imath \phi}\circeigv{n}^{k}} \leq 2\abs{\sin\left( \frac{(2^d-1)\pi k}{2^{n+1}}\right)} \leq \frac{(2^{d}-1)\pi \abs{k}}{2^n}\leq \frac{2^d\pi \abs{k}}{2^n} = \frac{\pi \abs{k}}{2^\delta}.
\end{align}
\end{proof}

\section{Proof of Theorem \ref{thm:overlap-zi-gamma-pows}}\label{app:overlap}

\begin{lemma}
    \label{lemma:cot-i}
    Let $n\ge 2$, $\ell_1 \ge 2$ and $k$ be integers with $k$ \underline{odd} and $n \ge \ell_1$.
    Let $\omega=\exp(2\pi \imath\, k/2^n)$ and $\xi=\omega^{2^{n-\ell_1+1}}=\exp(2\pi \imath \,k/2^{\ell_1-1})$.
    Then
    \begin{subequations}
    \begin{align}
        \label{eq:cot-i}
        \prod_{\ell=\ell_1}^n\left(1 + \omega^{2^{n-\ell}}\right)
        =&
        \prod_{\ell=\ell_1}^n\left(1 + \exp\left(\frac{2\pi \imath k}{2^{\ell}}\right)\right)\\
        =&
        \frac{1-\xi}{2}
        \left(1 + \imath \cot\left(\frac{\pi k}{2^n}\right)\right)
        \sim
        \imath\frac{1-\xi}{2\pi}
        \cdot
        \frac{2^n}{k},
    \end{align}
    \end{subequations}
    where the asymptotic equivalence is w.r.t. $n\to \infty$.
    As special case, we have the factor $(1-\xi)/2=1$ when $\ell_1=2$.
\end{lemma}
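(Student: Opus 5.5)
The plan is to evaluate the product in closed form with a telescoping identity, in the same spirit as Morrie's law \eqref{eq:morries-law} in the proof of Theorem \ref{thm:approx-circeigv-distance}.

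\textbf{Step 1: rewrite the factors.} Set $z_\ell \equiv \exp(2\pi\imath k/2^{\ell})$ for $\ell\ge 1$. Then $\omega^{2^{n-\ell}}=z_\ell$, since $\omega=\exp(2\pi\imath k/2^n)$ and $2^{n-\ell}/2^n=1/2^\ell$. This gives the first equality in \eqref{eq:cot-i}. The key relation is $z_\ell^2=z_{\ell-1}$, so each factor satisfies $(1-z_\ell)(1+z_\ell)=1-z_{\ell-1}$.

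\textbf{Step 2: telescope.} Multiply the product by $(1-z_n)$ and absorb the factors from $\ell=n$ downwards:
\begin{align}
(1-z_n)\prod_{\ell=\ell_1}^n(1+z_\ell)=(1-z_{n-1})\prod_{\ell=\ell_1}^{n-1}(1+z_\ell)=\cdots=1-z_{\ell_1-1}=1-\xi,
\end{align}
because $z_{\ell_1-1}=\exp(2\pi\imath k/2^{\ell_1-1})=\xi$. To divide by $1-z_n$ we need $z_n\neq 1$. This holds because $k$ is odd and $n\ge 1$, so $2^n\nmid k$. Hence the product equals $(1-\xi)/(1-z_n)$.

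\textbf{Step 3: express $1/(1-e^{\imath\theta})$ with a cotangent.} For $\theta\notin 2\pi\mathbb{Z}$,
\begin{align}
1-e^{\imath\theta}=-2\imath\sin(\theta/2)\,e^{\imath\theta/2}.
\end{align}
Therefore
\begin{align}
\frac{1}{1-e^{\imath\theta}}=\frac{\imath\left(\cos(\theta/2)-\imath\sin(\theta/2)\right)}{2\sin(\theta/2)}=\frac{1}{2}\left(1+\imath\cot(\theta/2)\right).
\end{align}
Taking $\theta=2\pi k/2^n$ gives $\theta/2=\pi k/2^n$, which yields the stated closed form $\frac{1-\xi}{2}\left(1+\imath\cot\left(\frac{\pi k}{2^n}\right)\right)$.

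\textbf{Step 4: asymptotics and the special case.} Hold $\ell_1$ and $k$ fixed and let $n\to\infty$; then $\xi$ does not depend on $n$. Since $\pi k/2^n\to 0$, we have $\cot(\pi k/2^n)\sim 2^n/(\pi k)$, which diverges. So the term $1$ is negligible next to $\imath\cot$, and the product is asymptotic to $\imath\frac{1-\xi}{2\pi}\cdot\frac{2^n}{k}$. This requires $\xi\neq 1$, i.e.\ $2^{\ell_1-1}\nmid k$, which holds since $k$ is odd and $\ell_1\ge 2$. For $\ell_1=2$ we get $\xi=e^{\imath\pi k}=-1$ because $k$ is odd, so $(1-\xi)/2=1$.

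The only delicate point is bookkeeping: keep the indices consistent so the telescoping stops exactly at $z_{\ell_1-1}=\xi$, and check that oddness of $k$ guarantees both non-division by zero and $\xi\neq 1$. The rest is elementary trigonometry.
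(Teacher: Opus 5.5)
Your proof is correct and follows the paper's own argument. Both telescope $(1-\omega)\prod_{\ell=\ell_1}^n(1+\omega^{2^{n-\ell}})$ down to $1-\xi$ and then rewrite $2/(1-\omega)$ as $1+\imath\cot(\pi k/2^n)$. The only difference is cosmetic: you factor out $e^{\imath\theta/2}$, whereas the paper multiplies by the conjugate and applies the half-angle formula. Your explicit check that $\xi\neq 1$, which makes the asymptotic equivalence non-degenerate, is a small improvement, since the paper leaves this implicit.
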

\begin{proof}
    Since $\gcd(2, k)=1$ and $n\ge 2$, then $\omega\ne 1$.
    Note that
    \begin{align}
        (1-\omega)\prod_{\ell=\ell_1}^n \left(1+\omega^{2^{n-\ell}}\right) =& (1-\omega^2)\prod_{\ell=\ell_1}^{n-1} \left(1+\omega^{2^{n-\ell}}\right),
    \end{align}
    which recursively reduces to
    \begin{align}
        (1-\omega)\prod_{\ell=\ell_1}^n \left(1+\omega^{2^{n-\ell}}\right) =&
        \left(1-\omega^{2^{n-\ell_1}}\right)\left(1+\omega^{2^{n-\ell_1}}\right)=1-\xi.
    \end{align}
    When $\ell_1=2$, we have the special case in with the latter reduces to $(1-\imath^k)(1+\imath^k)=1-\xi=2$,
    since $2\nmid k$.
    So we obtain that
    \begin{align}
        \label{eq:cot-i-proof-detail-i}
        \prod_{\ell=\ell_1}^n \left(1+\omega^{2^{n-\ell}}\right) =&
        \frac{1-\xi}{2}
        \cdot
        \frac{2}{1-\omega},
    \end{align}
    which is well defined since $\omega\ne 1$.

    Now, since $(1-\omega)(1-\overline{\omega})=2(1-\repart(\omega))$, we have that
    \begin{subequations}
    \begin{align}
        \frac{2}{1-\omega} =& 
        \frac{2(1-\overline{\omega})}{(1-\omega)(1-\overline{\omega})}\\
        =& \frac{1-\overline{\omega}}{1-\repart(\omega)}\\
        =& \frac{1-\repart(\omega)+\imath \impart(\omega)}{1-\repart(\omega)}\\
        =& 1+ \imath\frac{\sin(2\pi k/2^n)}{1-\cos(2\pi k/2^n)}\\
        \label{eq:cot-i-proof-detail-ii}
        =& 1 + \imath \cot\left(\frac{\pi k}{2^n}\right),
    \end{align}
    \end{subequations}
    with the latest equality following from the half-angle formula $\cot(\theta/2)=\frac{\sin(\theta)}{1-\cos(\theta)}$.
    Hence, the second equality of claim in \eqref{eq:cot-i}, is a consequence of \eqref{eq:cot-i-proof-detail-i} and \eqref{eq:cot-i-proof-detail-ii}.

    Finally, since the mapping $z \mapsto \pi \cot(\pi z)$ has simple poles for $z\in \mathbb{Z}$ with residue $1$ \cite{Lang1999},
    then the asymptotic relation follows directly (noting that $\xi$ effectively does not depend upon $n$).
\end{proof}

\begin{lemma}
    \label{lemma:trace-herm-skewh}
    For any conformable complex square matrices $A$ and $B$ with $B$ Hermitian and $[A^{\dagger}, B]=0$, then
    \begin{subequations}
    \begin{align}
        \label{eq:trace-herm-skewh}
        \traceop\left(\hermpart(A)\right) = \repart \traceop(A),\quad&
        \traceop\left(\skewhpart(A)\right) = \imath\impart \traceop(A),\\
        \label{eq:herm-skewh-rel-with-mul}
        B \hermpart(A) = \hermpart(BA),\quad& B \skewhpart(A) = \skewhpart(BA).
    \end{align}
    \end{subequations}
\end{lemma}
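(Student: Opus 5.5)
The plan is to verify the four identities directly from the definitions $\hermpart(A)=(A+A^\dagger)/2$ and $\skewhpart(A)=(A-A^\dagger)/2$. We use linearity of the trace, the identity $\traceop(A^\dagger)=\overline{\traceop(A)}$, and the commutation hypothesis $[A^\dagger,B]=0$ together with $B=B^\dagger$.

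\textbf{Trace identities (\ref{eq:trace-herm-skewh}).} By linearity,
\begin{align}
\traceop(\hermpart(A))=\tfrac12\left(\traceop(A)+\overline{\traceop(A)}\right)=\repart\traceop(A).
\end{align}
Likewise,
\begin{align}
\traceop(\skewhpart(A))=\tfrac12\left(\traceop(A)-\overline{\traceop(A)}\right)=\imath\impart\traceop(A).
\end{align}
No hypothesis on $B$ is needed for this part.

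\textbf{Multiplicative identities (\ref{eq:herm-skewh-rel-with-mul}).} Expand $\hermpart(BA)=\tfrac12\left(BA+(BA)^\dagger\right)=\tfrac12\left(BA+A^\dagger B^\dagger\right)$. Hermiticity of $B$ turns the second term into $A^\dagger B$. The hypothesis $[A^\dagger,B]=0$ then gives $A^\dagger B=BA^\dagger$, so
\begin{align}
\hermpart(BA)=\tfrac12 B\left(A+A^\dagger\right)=B\hermpart(A).
\end{align}
The skew-Hermitian identity follows by the identical computation with a minus sign.

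The only point needing care is the direction of the commutation, which is exactly what the hypothesis $[A^\dagger,B]=0$ supplies. Note that $[A^\dagger,B]=0$ is equivalent to $[A,B]=0$, by taking adjoints and using $B=B^\dagger$. This is also the form in which the lemma will be applied, with $B=O_{\boldsymbol p}$ and $A$ a diagonal power $\circeigv{n}^k$; there all the matrices are diagonal, so the hypothesis holds trivially. There is no real obstacle; the statement is a routine bookkeeping lemma.
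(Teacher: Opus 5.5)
Your proof is correct: the trace identities follow from linearity and $\traceop(A^\dagger)=\overline{\traceop(A)}$, and the multiplicative ones follow from $(BA)^\dagger=A^\dagger B=BA^\dagger$, which uses $B=B^\dagger$ and the commutation hypothesis. The paper states this lemma without proof, treating it as immediate, so your direct verification from the definitions is the routine argument it implicitly relies on in the proof of Theorem~\ref{thm:overlap-zi-gamma-pows}.
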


\begin{proof}[Proof of Theorem \ref{thm:overlap-zi-gamma-pows}]
    Since $O_{\boldsymbol{p}}$ is Hermitian and $[O_{\boldsymbol{p}}, \circeigv{n}^{\pm k}]=0$ (required for \ref{lemma:trace-herm-skewh}), then
    \begin{subequations}
    \begin{align}
        \traceop\left(O_{\boldsymbol{p}}
            \frac{\Gamma^k_n + (-1)^s \Gamma_n^{-k}}{2}
        \right) =&
        \frac{1+(-1)^s}{2}\traceop\left(O_{\boldsymbol{p}}\hermpart\left(\circeigv{n}^k\right)\right)
        + \frac{1-(-1)^s}{2}\traceop\left(O_{\boldsymbol{p}} \skewhpart\left(\circeigv{n}^k\right)\right)\\
        \underset{\eqref{eq:herm-skewh-rel-with-mul}, \eqref{eq:trace-herm-skewh}}{=}&
        \frac{1+(-1)^s}{2}\repart\traceop\left(O_{\boldsymbol{p}} \circeigv{n}^k\right)
        + \imath\frac{1-(-1)^s}{2}\impart\traceop\left(O_{\boldsymbol{p}} \circeigv{n}^k\right).
    \end{align}
    \end{subequations}
    Consider the arguments of real and imaginary part, so
    \begin{align}
        \traceop\left(O_{\boldsymbol{p}} \circeigv{n}^k\right) =&
        \traceop\left(R^k\right)
        \prod_{\ell=1}^{n-d} \traceop\left(\pauliz^{p_{\ell}} P\left(2\pi k / 2^{\ell}\right)\right)\\
        =& 
        \traceop\left(R^k\right)
        \prod_{\ell=1}^{n-d}\left(1 + (-1)^{p_{\ell}}\exp\left(\frac{2\pi \imath k}{2^{\ell}}\right)\right),
    \end{align}
    with $R^k=\bigotimes_{\ell=n-d+1}^n P(2\pi k / 2^{\ell})$.
    In the claim we consider the case where $k\ge 1$ is an odd integer. 
    So, as a result of \ref{lemma:cot-i} (with $\ell_1=n-d+1\ge 2$, $n\ge 2$ and $k$ odd) we have that
    \begin{align}
        \traceop\left(R^k\right) =& \prod_{\ell=n-d+1}^n \left(1+e^{2\pi \imath k/2^{\ell}}\right)\\
        =&
        \frac{1-\xi_k}{2}
        \left(1 + \imath \cot\left(\frac{\pi k}{2^n}\right)\right),
    \end{align}
    with $\xi_k=\exp(2\pi \imath \,k/2^{n-d})$.
\end{proof}

\begin{corollary}
    \label{corollary:overlap-zi-gamma-pows-tr-scale}
    Under the conditions of Theorem \ref{thm:overlap-zi-gamma-pows}, 
    assume $\frac{k}{2^n}=\bigoh\left(\frac{1}{2^n}\right)$, then
    \begin{subequations}
    \begin{align}
        \frac{\left|\traceop\left(O_{\boldsymbol{p}} \circeigv{n}^k\right)\right|}{2^n}=&
        \frac{1}{2^n}
        \left|
            \frac{1-\xi_k}{2}
            \left(1 + \imath \cot\left(\frac{\pi k}{2^n}\right)\right)
        \right|\cdot
        \prod_{\ell=1}^{n-d}\left|1 + (-1)^{p_{\ell}}\exp\left(\frac{2\pi \imath k}{2^{\ell}}\right)\right|\\
        =&
        \sin\left(\frac{\pi k}{2^{n-d}}\right)
        \cdot
        \frac{1}{2^n \sin(\pi k/2^n)}
        \cdot
        \prod_{\ell=1}^{n-d}2 \sqrt{\frac{1 + (-1)^{p_{\ell}}\cos(2\pi k/2^{\ell})}{2}}\\
        =&
        2^{n-d}\sin\left(\frac{\pi k}{2^{n-d}}\right)
        \cdot
        \frac{\bigoh(2^n)}{2^n}
        \cdot
        \prod_{\ell=1}^{n-d}\sqrt{\frac{1 + (-1)^{p_{\ell}}\cos(2\pi k/2^{\ell})}{2}}\\
        =&
        \gamma
        2^{n-d}\sin\left(\frac{\pi k}{2^{n-d}}\right)
        \cdot
        \prod_{\ell=1}^{n-d}\sqrt{\frac{1 + (-1)^{p_{\ell}}\cos(2\pi k/2^{\ell})}{2}},
    \end{align}
    \end{subequations}
    for some constant $\gamma > 0$.
\end{corollary}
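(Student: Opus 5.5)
The argument is a modulus computation starting from the closed form in Theorem \ref{thm:overlap-zi-gamma-pows}. I take absolute values of Eq.(\ref{eq:overlap-zi-trace-details}) and divide by $2^n$. This factorises into three moduli: the prefactor $|(1-\xi_k)/2|$, the factor $|1+\imath\cot(\pi k/2^n)|$, and the product over $\ell$ of $|1+(-1)^{p_\ell}e^{2\pi\imath k/2^\ell}|$. Each of these has a closed form.

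\textbf{Step 1: the prefactor.} With $\theta = 2\pi k/2^{n-d}$, I use $|1-e^{\imath\theta}| = 2|\sin(\theta/2)|$. This gives $|(1-\xi_k)/2| = |\sin(\pi k/2^{n-d})|$.

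\textbf{Step 2: the cotangent factor.} I use $|1+\imath\cot x| = 1/|\sin x|$, so this factor equals $1/|\sin(\pi k/2^n)|$. For $0<k<2^n$ all the sines are positive, so the absolute values can be dropped.

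\textbf{Step 3: the $\ell$-product.} For each factor, $|1\pm e^{\imath\theta}|^2 = 2(1\pm\cos\theta)$. Hence
\begin{align}
\left|1+(-1)^{p_\ell}e^{2\pi\imath k/2^\ell}\right| = 2\sqrt{\frac{1+(-1)^{p_\ell}\cos(2\pi k/2^\ell)}{2}}.
\end{align}
Pulling the factor $2^{n-d}$ out of this product and combining it with Steps 1 and 2, the quantity becomes
\begin{align}
2^{n-d}\sin\left(\frac{\pi k}{2^{n-d}}\right)\cdot\frac{1}{2^n\sin(\pi k/2^n)}\cdot\prod_{\ell=1}^{n-d}\sqrt{\frac{1+(-1)^{p_\ell}\cos(2\pi k/2^\ell)}{2}}.
\end{align}

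\textbf{Step 4: the remaining factor.} The last step, and the only place where the hypothesis $k/2^n=\bigoh(1/2^n)$ (i.e.\ $k$ bounded) enters, is to show that $1/(2^n\sin(\pi k/2^n))$ is bounded by a constant. Here I use Jordan's inequality, $\sin x\ge 2x/\pi$ on $[0,\pi/2]$, valid since $\pi k/2^n\le\pi/2$ for large $n$. It gives
\begin{align}
\frac{1}{2^n\sin(\pi k/2^n)} \le \frac{1}{2k}.
\end{align}
Conversely, $\sin x\le x$ gives the lower bound $1/(\pi k)$. So this factor is $\Theta(1)$, and I define $\gamma := 1/(2^n\sin(\pi k/2^n))$. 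It is positive and lies in $[1/(\pi k),\,1/(2k)]$, a $k$-dependent but $n$-bounded constant, which yields the claimed form. Its limit as $n\to\infty$ is $1/(\pi k)$, consistent with the asymptotics of Lemma \ref{lemma:cot-i}.

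I expect the main obstacle to be interpretive rather than technical: making precise in what sense $\gamma$ is a constant. I would state it as a two-sided bound that is uniform in $n$ for each fixed $k$, and uniform over any bounded range of $k$. The sign bookkeeping in the modulus computations is routine once $0<k<2^{n-d}$ is assumed.
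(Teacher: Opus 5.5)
Your proposal is correct and follows the same route as the paper. You take moduli in Eq.~(\ref{eq:overlap-zi-trace-details}) and use $|1-\xi_k|=2|\sin(\pi k/2^{n-d})|$, $|1+\imath\cot x|=1/|\sin x|$ and $|1\pm e^{\imath\theta}|=\sqrt{2(1\pm\cos\theta)}$; you then pull out $2^{n-d}$ and absorb $1/(2^n\sin(\pi k/2^n))$ into $\gamma$, exactly as the paper does. Your Jordan-inequality bound $\gamma\in[1/(\pi k),1/(2k)]$ makes the paper's $\bigoh(2^n)/2^n$ step precise. Your explicit assumption $0<k<2^{n-d}$ is needed to drop the absolute value on $\sin(\pi k/2^{n-d})$, which the paper does without comment.
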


\section{Unitary diagonalisation of the skew-circulant generator}\label{app skew-circulant diag}

Using the eigenvalue matrix $\Gamma_n$ of the cyclic generator [Eq.~\eqref{eq:gamma-eig}] and the QFT operator [Eq.~\eqref{qftn operator}], we establish the following lemma.
\begin{lemma}
    \label{lemma:multictrl-qft}
    Let $t\in [0, 1)$ and let $\omega=\exp(2\pi \imath/N)$ where $N=2^n$ given the positive integer $n$.
    Then,
    \begin{align}
        \label{eq:multictrl-qft}
        \omega^t \circeigv{n}^{t} \left(C_n \circeigv{n}^{-t} C_n^{\top}\right) =& 
        \idenmnodim + \left(e^{2\pi \imath\, t} - 1\right) \ket{N-1}\bra{N-1}.
    \end{align}
\end{lemma}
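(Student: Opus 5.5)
We verify the identity entrywise on the computational basis; both sides turn out to be diagonal. Here $\circeigv{n}^{t}$ for non-integer $t$ is the principal power $\sum_j \omega^{tj}\ket{j}\bra{j}$ with $j\in\{0,\dots,N-1\}$, which is the convention implicit in $\sqrt{\circeigv{n}}$ in Eq.~\eqref{eq:Nn-diag}.

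\textbf{Step 1: conjugation of a diagonal operator by the cyclic shift.} From Eq.~\eqref{cyclic permutation}, $\bra{i}C_n\ket{j}=1$ iff $j\equiv i+1 \pmod N$. Hence $C_n\ket{j}=\ket{(j-1)\bmod N}$, and $C_n^\top=C_n^{-1}$ acts as $\ket{j}\mapsto\ket{(j+1)\bmod N}$. For any diagonal $D=\sum_j d_j\ket{j}\bra{j}$, relabelling the summation index gives
\begin{align}
C_n D C_n^\top = \sum_{j} d_{(j+1)\bmod N}\ket{j}\bra{j}.
\end{align}
Taking $D=\circeigv{n}^{-t}$, so $d_j=\omega^{-tj}$, we get $C_n\circeigv{n}^{-t}C_n^\top=\sum_j \omega^{-t\,((j+1)\bmod N)}\ket{j}\bra{j}$.

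\textbf{Step 2: collect the diagonal entries.} Multiplying on the left by $\omega^t\circeigv{n}^{t}$, the $j$-th diagonal entry becomes
\begin{align}
\omega^{t}\,\omega^{tj}\,\omega^{-t((j+1)\bmod N)}.
\end{align}
There are two cases.
\begin{itemize}
\item For $0\le j\le N-2$ we have $(j+1)\bmod N=j+1$, so the exponents cancel: $t+tj-t(j+1)=0$, and the entry is $1$.
\item For $j=N-1$ we have $(j+1)\bmod N=0$, so the entry is $\omega^{t}\omega^{t(N-1)}=\omega^{tN}=e^{2\pi\imath t}$.
\end{itemize}
The resulting diagonal operator is therefore $\idenmnodim+(e^{2\pi\imath t}-1)\ket{N-1}\bra{N-1}$, as claimed.

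\textbf{Main obstacle.} The only delicate point is consistency of the fractional powers. One must use $\omega^{tj}$ with the representative $j\in[0,N)$ throughout; otherwise the wraparound term $j=N-1$ would not produce $e^{2\pi\imath t}$. With that convention fixed, no branch ambiguity arises, since all exponents are real multiples of $2\pi\imath/N$ computed directly.

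One should also double-check the orientation of $C_n$ against Eq.~\eqref{cyclic permutation}. If the shift went the other way, the wraparound would land on $j=0$ instead of $j=N-1$. Given the stated definition ($1$s on the superdiagonal), it lands at $j=N-1$ as required.
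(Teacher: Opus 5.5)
Your proof is correct and follows essentially the same route as the paper. Both arguments use $C_n\ket{j}=\ket{(j-1)\bmod N}$ to turn $C_n\Gamma_n^{-t}C_n^\top$ into the shifted diagonal $\sum_j \omega^{-t((j+1)\bmod N)}\ket{j}\bra{j}$. Both then split into the entries $0\le j\le N-2$, where the exponents cancel, and the wraparound entry $j=N-1$, which gives $\omega^{tN}=e^{2\pi\imath t}$. Your explicit relabelling step and the stated convention $\Gamma_n^t=\sum_{j=0}^{N-1}\omega^{tj}\ket{j}\bra{j}$ make the computation slightly cleaner than the paper's, whose intermediate line writes $\omega^{tk'}$ where $\omega^{-tk'}$ is meant.
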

\begin{proof}
    The claim follows directly from the action of the circulants on the diagonals $\circeigv{n}^{\pm t}$.
   We note that $C_n\ket{k} = \ket{(k-1) \mod N}$ and $\bra{k}C_n^\top = \bra{(k-1) \mod N}$.
    Then,
    \begin{subequations}
    \begin{align}
        \omega^t \circeigv{n}^{t} \left(C_n \circeigv{n}^{-t} C_n^{\top}\right) =& \omega^t \sum_{k=0}^{N-1}w^{tk}\ket{k}\bra{k}\sum_{k^\prime=0}^{N-1}w^{tk^\prime}C_n\ket{k^\prime}\bra{k^\prime}C_n^\top = \n
        =&
        \omega^t \sum_{k=0}^{N-2} \omega^{tk} \omega^{-t(k+1)} \ket{k}\bra{k}
        +
        \omega^{t} \omega^{t(N-1)} \ket{N-1}\bra{N-1}\\
        =&
        \left(\idenmnodim - \ket{N-1}\bra{N-1}\right) + \omega^{tN} \ket{N-1}\bra{N-1},
    \end{align}
    \end{subequations}
    but $\omega^{tN}=e^{2\pi \imath\, t}$, hence the claim is proved.
\end{proof}
An important corollary of this lemma is found by multiplying the left hand side expression in Eq.(\ref{eq:multictrl-qft}) by $C_n$, and noticing it corresponds to the definition of the skew-circulant generator in Eq.(\ref{skew-circulant generator}). 
\begin{corollary}
    Consider the skew-circulant generator $N_n$ defined in Eq.(\ref{skew-circulant generator}). Then
    \begin{align}
        N_n= \sqrt{\omega}\sqrt{\Gamma_n}C_n\sqrt{\Gamma_n}^\dagger.
    \end{align}
\end{corollary}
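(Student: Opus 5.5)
The plan is to specialise Lemma~\ref{lemma:multictrl-qft} to $t=1/2$ and then right-multiply by $C_n$. Throughout, $\sqrt{\Gamma_n}$ is read as $\Gamma_n^{1/2}=\sum_{k}\omega^{k/2}\ket{k}\bra{k}$ with $\omega^{k/2}\equiv e^{\pi\imath k/2^n}$, and $\sqrt{\omega}=e^{\pi\imath/2^n}$. These are exactly the conventions for real powers $\omega^t$ and $\Gamma_n^{t}$ used in Lemma~\ref{lemma:multictrl-qft}. Since $\sqrt{\Gamma_n}$ is a diagonal unitary, $\sqrt{\Gamma_n}^\dagger=\Gamma_n^{-1/2}$.

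First, set $t=1/2\in[0,1)$ in Eq.~(\ref{eq:multictrl-qft}). Because $e^{2\pi\imath/2}-1=-2$, this gives
\begin{align}
\sqrt{\omega}\,\sqrt{\Gamma_n}\,C_n\,\sqrt{\Gamma_n}^\dagger\,C_n^\top=\I-2\ket{2^n-1}\bra{2^n-1}.
\end{align}

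Second, multiply both sides on the right by $C_n$. Since $C_n$ is a permutation matrix, $C_n^\top C_n=\I$, so the left-hand side becomes $\sqrt{\omega}\sqrt{\Gamma_n}C_n\sqrt{\Gamma_n}^\dagger$. The right-hand side becomes $\left(\I-2\ket{2^n-1}\bra{2^n-1}\right)C_n$.

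Third, identify this product with $N_n$. Take Eq.~(\ref{skew-circulant generator}) with $k=1$, so that $q=0$ and $r=1$. The sum over $i$ then has the single term $i=1$, and the formula reads $N_n=\left(\I-2\ket{2^n-1}\bra{2^n-1}\right)C_n$. This is exactly the right-hand side obtained above. As a consistency check, the projector flips the sign of the last row of $C_n$, which is the row containing the wraparound $1$ in column $0$. This reproduces the displayed matrix form of $N_n$.

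The only delicate point is bookkeeping of conventions. The index action $C_n\ket{k}=\ket{(k-1)\bmod 2^n}$ follows from Eq.~(\ref{cyclic permutation}), and it must match the one used in the proof of Lemma~\ref{lemma:multictrl-qft}. The branch of the square root must be the one fixed above, so that the phase $\omega^{tN}=e^{\pi\imath}=-1$ comes out with the stated sign. Once these are aligned, the argument is a two-line computation.
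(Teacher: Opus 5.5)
Your proposal is correct and follows the paper's own argument. You specialise Lemma~\ref{lemma:multictrl-qft} to $t=1/2$, right-multiply by $C_n$ using $C_n^\top C_n=\I$, and recognise $(\I-2\ket{2^n-1}\bra{2^n-1})C_n$ as the $k=1$ case of Eq.~(\ref{skew-circulant generator}), while also making explicit the choice $t=1/2$ and the branch convention for $\sqrt{\Gamma_n}$ that the paper leaves implicit.
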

Thus, the eigenvectors of $N_n$ are given by $\sqrt{\Gamma_n}\qftn{n}$, with corresponding eigenvalues $\sqrt{\omega}\Gamma_n$.
\section{Banded Toeplitz operator as combination of diagonalisable matrices}\label{app:banded-toeplitz}
\begin{proof}[ Proof of Lemma \ref{lemma:banded-T-D1-D2}]
Starting from Eq.(\ref{eq:sum-Cn-Ln}), when $k=0, 1, \ldots, m$ with $m < 2^n$ by assumption, we have that
    \begin{align}
        \left(L_n^{\top}\right)^k = \frac{C_n^k + N_n^k}{2},
        \quad&
        L_n^k = \frac{C_n^{-k} + N_n^{-k}}{2},
    \end{align}
    then together with Eqss(\ref{eq:Cn-diag}) 
    and (\ref{eq:Nn-diag}), the expression in \eqref{eq:toeplitz-poly} becomes
    \begin{subequations}
    \begin{align}
        T =&
        \alpha_0 \idenmnodim + \sum_{k=1}^{m} \left(
            \alpha_{-k} L_n^k +
            \alpha_{k} \left(L_n^{\top}\right)^k
        \right)\\
        =&
        \alpha_0 \idenmnodim + \sum_{k=1}^{m} \left(
            \alpha_{-k} \frac{C_n^{-k} + N_n^{-k}}{2} +
            \alpha_{k} \frac{C_n^k + N_n^k}{2}
        \right)\\
        =&
        \label{eq:new-trotter-lemma-proof-i}
        \frac{1}{2}\sum_{k=-m}^{m} \alpha_{k} \left(C_n^k + N_n^k\right)\\
        =&
        \frac{1}{2}\sum_{k=-m}^{m} \alpha_{k} \left(
            \qftn{n}\circeigv{n}^k \qftdgn{n} +
            \sqrt{\circeigv{n}}\qftn{n}
                \left(\omega^{\frac{k}{2}} \circeigv{n}^k\right)
            \qftdgn{n}\sqrt{\circeigv{n}}^{\dagger}
        \right),
    \end{align}
    \end{subequations}
    which corresponds to the claim in \eqref{eq:new-trotter-lemma-first}.
\end{proof}

\section{QFT-Trotter product formula}\label{app:qft-trotter}
\begin{proof}[Proof of Theorem \ref{thm:qft-trotter}]

Let $\widetilde{A} = \qftn{n}\exp\left(\widetilde{D}_1/2\right)\qftn{n}^\dagger$, with $\widetilde{D}_1 = \sum_{k=-m}^m\alpha_k \widetilde{\Gamma}_{d,n}^k$, and \\$\widetilde{B}=\sqrt{\Gamma_n}\qftn{n}\exp\left( \widetilde{D}_2/2\right)\qftn{n}^\dagger\sqrt{\Gamma_n}^\dagger$, with $\widetilde{D}_2 = \sum_{k=-m}^m\alpha_k \omega^{k/2}\widetilde{\Gamma}_{d,n}^k$. The Trotter error is then
\begin{align}\label{eq:complete-trotter-error}
    \norm{e^{Tt}-\left(\widetilde{A}^{t/c}\widetilde{B}^{t/c}\right)^c}\leq \norm{e^{Tt}-\left(A^{t/c}B^{t/c}\right)^c} + \norm{\left(A^{t/c}B^{t/c}\right)^c-\left(\widetilde{A}^{t/c}\widetilde{B}^{t/c}\right)^c}.
\end{align}

The first term can be bounded by direct application of the standard Trotter-error estimate bound \cite{Hall2015} we have for large $c$
\begin{align}
   \norm{e^{Tt} - \left( A^{t/c}B^{t/c}\right)^c} = \mathcal{O}\left(\exp\left[t\left(\max_j \repart D_{1,jj} + \max_j \repart D_{2,jj}\right)\right]\frac{t^2}{c}\norm{[\log(A),\log(B)]} \right),
\end{align}
where we have used that $\log(A)$ and $\log(B)$ are normal operators, since they can be written as a polynomial on the normal operators $C_n$ adn $N_n$, respectively, and thus $\norm{e^{t \log(A/B)}} = \abs{e^{t\max_j  \lambda_j^{A/B}}} = e^{t \max_j \repart D_{1/2,jj}}$. If we assume that $\max_j \repart D_{1/2,jj}\leq 0$, then $e^{t \max_j \repart D_{1/2,jj}} \leq e^0 = 1$.

Using the diagonalisation of $C_n$, Eq.(\ref{eq:Cn-diag}), of $N_n$ (\ref{eq:Nn-diag}) and the definition of the operators $A$ and $B$ in Eq.(\ref{eq:A-B}) we see that
\begin{align}
\log(A) = \frac{1}{2}\sum_{k=-m}^m\alpha_k C_n^k,\q \log(B) = \frac{1}{2}\sum_{k=-m}^m\alpha_k N_n^k,
\end{align}
and thus
\begin{align}\label{eq:logA-logB}
\norm{[\log(A),\log(B)]} \leq \frac{1}{4}\sum_{k=-m}^m\sum_{l=-m}^m \abs{\alpha_k\alpha_l}\norm{[C_n^k,N_n^l]}\leq \frac{1}{2}\sum_{k=-m}^m\sum_{l=-m}^m \abs{\alpha_k\alpha_l},
\end{align}
where we have used that $C_n^k$ and $N_n^l$ are unitaries, and for any pair of unitaries $U,V$ we have that $\norm{[U,V]} \leq \norm{UV}+\norm{VU}\leq 2$. The $1/2$ factor can be absorved by $\mathcal{O}(\cdot)$. Then, from Eq.(\ref{eq:logA-logB}) we find
\begin{align}
\sum_{k=-m}^m\sum_{l=-m}^m \abs{\alpha_k\alpha_l} =\left(\sum_{k=-m}^m\abs{\alpha_k}\right)^2 \equiv \norm{\alpha}_1^2.
\end{align}

The second term in Eq.(\ref{eq:complete-trotter-error}) can be bounded using that
\begin{align}
    \norm{\left(A^{t/c}B^{t/c}\right)^c-\left(\widetilde{A}^{t/c}\widetilde{B}^{t/c}\right)^c} \leq c \norm{A^{t/c}B^{t/c}-\widetilde{A}^{t/c}\widetilde{B}^{t/c}} \leq c\left(\norm{A^{t/c}-\widetilde{A}^{t/c}} + \norm{B^{t/c}-\widetilde{B}^{t/c}} \right),
\end{align}
where we have used that, since $\repart{D_{1/2}} \leq 0$, $\norm{A^{t/c}} \leq 1$ and $\norm{B^{t/c}}\leq 1$. Because the unitary $\qftn{n}$ preserves the norm, we have that
\begin{align}
    \norm{A^{t/c}-\widetilde{A}^{t/c}} \leq& \norm{\exp\left(D_1t/2c\right)-\exp\left(\widetilde{D}_1t/2c\right)} \leq \frac{t}{2c}\norm{D_1-\widetilde{D}_1} \leq \n
    &\frac{t}{2c}\sum_{k=-m}^m\abs{\alpha_k}\norm{\Gamma_n^k-\widetilde{\Gamma}_{d,n}^k} \leq \frac{t}{2c}\sum_{k=-m}^m\abs{\alpha_k}\frac{\pi \abs{k}}{2^\delta} \leq \frac{\pi tm}{2^{\delta+1}c}\norm{\alpha}_1,
\end{align}
where we have used the result of Appendix \ref{app:overlap}. The same can be done for the term $\norm{B^{t/c}-\widetilde{B}^{t/c}}$, such that the final bound is
\begin{align}
    \norm{\left(A^{t/c}B^{t/c}\right)^c-\left(\widetilde{A}^{t/c}\widetilde{B}^{t/c}\right)^c} \leq \mathcal{O}\left( \frac{\pi t m}{2^\delta}\norm{\alpha}_1\right).
\end{align}
\end{proof}

\section{Discrete Cosine/Sine Transformations}\label{app DCT DST}
\begin{proof}[Proof of Eq.(\ref{eq:DCT-DST-diag})]
    \begin{align}
        \qftn{n+1}\Delta_p^{(n+1)} \qftn{n+1}^\dagger = \Lambda_{n+1} \to  \Delta_p^{(n+1)} = \qftn{n+1}^\dagger \Lambda_{n+1} \qftn{n+1},
    \end{align}
    where $\Lambda_{n+1} = $diag$(\lambda_k)$ acting on $n+1$ qubits. We have the relation \cite{DCT-quantum}
    \begin{align}
        U_{n+1}^\dagger \qftn{n+1}V_{n+1}
 = C_N^{II}\oplus (-i)S_N^{II}, 
 \end{align}
 and thus
 \begin{align}
     V_{n+1} = \qftn{n+1}^\dagger U_n(C_{II} \oplus -iS_{II}),\n
     V_{n+1}^\dagger = (C_{II}\oplus iS_{II})U_{n+1}^\dagger \qftn{n+1},
 \end{align}
 such that
 \begin{align}
     &V_{n+1}^\dagger \Delta_p^{(n+1)} V_{n+1} = V_{n+1} \qftn{n+1}^\dagger \Lambda_{n+1}\qftn{n+1} V_{n+1}^\dagger =\n
     =& (C_{II}\oplus iS_{II}) U_{n+1}^\dagger \qftn{n+1} \qftn{n+1}^\dagger \Lambda_{n+1} \qftn{n+1}\qftn{n+1}^\dagger U_{n+1} (C_{II}\oplus (-i)S_{II})=\n
     =&(C_{II}\oplus iS_{II})U_{n+1}^\dagger \Lambda_{n+1} U_{n+1} (C_{II}\oplus (-i)S_{II}).
 \end{align}

From the definition of $U_{n+1}$ in \cite{DCT-quantum} we can check that
\begin{align}
    U_{n+1}^\dagger \Lambda_{n+1} U_{n+1} = \Lambda_n^{(N)} \oplus \Lambda_n^{(N)},
\end{align}
where $\Lambda_n^{(N)}$ is the diagonal matrix with the eigenvalues of the Neumann boundary condition Laplacian acting on $n+1$ qubits. These have the same values as the eigenvalues of the periodic Laplacian acting on $n$-qubits. Finally, since $C_{II}$ is the matrix diagonalizing the Neumann Laplacian, we obtain
\begin{align}
\left(C_{II}\oplus iS_{II}\right)\left(\Lambda_n^{(N)} \oplus \Lambda_n^{(N)}\right) \left(C_{II}\oplus (-i)S_{II}\right) = \left(C_{II}\Lambda_n^{(N)}C_{II} \right)\oplus \left(iS_{II}\Lambda_n^{(N)}(-i)S_{II} \right) = \Delta_N^{(n)}\oplus \Delta_D^{\prime (n)},
\end{align}
where $\Delta_D^{\prime (n)}$ is the Dirichlet Laplacian with $-3$ in the $(1,1)$ and $(N,N)$ entries.
\end{proof}

\end{document}